\documentclass{amsart}
\pdfoutput=1

%%%%%%%%%%%%%%%%%%%%%%%%%%% setlength %%%%%%%%%%%%%%%%%%%%%%%%%%%
\setlength{\textwidth}{170mm} \setlength{\textheight}{8.0in} \setlength{\oddsidemargin}{-0.225cm} \setlength{\evensidemargin}{-0.225cm} \setlength{\footskip}{30pt} \addtolength{\textheight}{.695in} \addtolength{\voffset}{-.55in} % delete this line for 10pt
\setlength{\parindent}{15pt} %no indent for paragraph
%%%%%%%%%%%%%%%%%%%%%%%%%%% setlength %%%%%%%%%%%%%%%%%%%%%%%%%%%

%%%%%%%%%%%%%%%%%%%%%%%%%%% usepackage %%%%%%%%%%%%%%%%%%%%%%%%%%
%\usepackage[notref,notcite]{showkeys}
\usepackage{amssymb}
\usepackage{amsthm}
\usepackage{amsfonts}
\usepackage{amsmath}
\usepackage{pmboxdraw}
\usepackage{verbatim} % for comment
\usepackage{graphicx}
\usepackage{color}
\usepackage[colorlinks=true, citecolor=blue, filecolor=black, linkcolor=black, urlcolor=black]{hyperref}
\usepackage{cite}
\usepackage[normalem]{ulem}
\usepackage{subcaption}
\usepackage{bbm}
\usepackage{bm}
\usepackage{mathtools}
%\mathtoolsset{showonlyrefs}
\usepackage{todonotes}
\usepackage{kantlipsum}
\allowdisplaybreaks
\usepackage{amsaddr}
%%%%%%%%%%%%%%%%%%%%%%%%%%% usepackage %%%%%%%%%%%%%%%%%%%%%%%%%%

%%%%%%%%%%%%%%%%%%%%%%%%%%%%%%% def %%%%%%%%%%%%%%%%%%%%%%%%%%%%%

\newcommand{\RN}[1]{%
	\textup{\uppercase\expandafter{\romannumeral#1}}%
}

\def\bfs{\boldsymbol}

\def\C{\mathbb{C}}

\def\P{\mathbf{P}}
\def\R{\mathbb{R}}

\newcommand{\Pf}{{\textup{Pf}}}

\newcommand{\re}{\operatorname{Re}}
\newcommand{\im}{\operatorname{Im}}

%\newcommand{\bfR}{\mathbf{R}}
%%%%%%%%%%%%%%%%%%%%%%%%%%%%%%% def %%%%%%%%%%%%%%%%%%%%%%%%%%%%%

\theoremstyle{plain}
\newtheorem{thm}{Theorem}[section]

\newtheorem{cor}[thm]{Corollary}
\newtheorem{lem}[thm]{Lemma}

\newtheorem{prop}[thm]{Proposition}

\theoremstyle{remark}
\newtheorem{defn}{Definition}

\newtheorem{rem}{Remark}
\newtheorem{ex}{Example}

\newcommand{\abs}[1]{\lvert#1\rvert}
\numberwithin{equation}{section}

\begin{document}

\title[Spectral moments of complex and symplectic non-Hermitian random matrices]{Spectral moments of complex and symplectic \\ non-Hermitian random matrices}
%%%%%%%%%%%%%%%%%%%%%%%%%%%%% author %%%%%%%%%%%%%%%%%%%%%%%%%%%%
\author{Gernot Akemann}
\address{Faculty of Physics, Bielefeld University, P.O. Box 100131, 33501 Bielefeld, Germany}
\email{akemann@physik.uni-bielefeld.de}

\author{Sung-Soo Byun}
\address{Department of Mathematical Sciences and Research Institute of Mathematics, Seoul National University, Seoul 151-747, Republic of Korea}
\email{sungsoobyun@snu.ac.kr}

\author{Seungjoon Oh}
\address{Department of Mathematical Sciences, Seoul National University, Seoul 151-747, Republic of Korea}
\email{seungjoonoh@snu.ac.kr}

\thanks{The authors are grateful to the DFG-NRF International Research Training Group IRTG 2235 supporting the Bielefeld-Seoul graduate exchange programme (NRF-2016K2A9A2A13003815). 
Sung-Soo Byun was supported by the National Research Foundation of Korea grant (RS-2023-00301976, RS-2025-00516909).}
%%%%%%%%%%%%%%%%%%%%%%%%%%%%% author %%%%%%%%%%%%%%%%%%%%%%%%%%%%

\begin{abstract} 
We study non-Hermitian random matrices belonging to the symmetry classes of the complex and symplectic Ginibre ensemble, and present a unifying and systematic framework for analysing mixed spectral moments involving both holomorphic and anti-holomorphic parts. For weight functions that induce a recurrence relation of the associated planar orthogonal polynomials, we derive explicit formulas for the spectral moments in terms of their orthogonal norms. This includes exactly solvable models such as the elliptic Ginibre ensemble and non-Hermitian Wishart matrices. In particular, we show that the holomorphic spectral moments of complex non-Hermitian random matrices coincide with those of their Hermitian limit up to a multiplicative constant, determined by the non-Hermiticity parameter. Moreover, we show that the spectral moments of the symplectic non-Hermitian ensemble admit a decomposition into two parts: one corresponding to the complex ensemble and the other constituting an explicit correction term. This structure closely parallels that found in the Hermitian setting, which naturally arises as the Hermitian limit of our results. Within this general framework, we perform a large-$N$ asymptotic analysis of the spectral moments for the elliptic Ginibre and non-Hermitian Wishart ensemble, revealing the mixed moments of the elliptic and non-Hermitian Marchenko--Pastur laws. Furthermore, for the elliptic Ginibre ensemble, we employ a recently developed differential operator method for the associated correlation kernel, to derive an alternative explicit formula for the spectral moments and obtain their genus-type large-$N$ expansion.
\end{abstract}

\maketitle

\section{Introduction}
 
In his seminal 1955 paper~\cite{Wi55}, Wigner introduced the use of eigenvalue statistics of random matrices to model physical observables, particularly the energy level spacings in heavy atomic nuclei. In a subsequent work~\cite{Wi58}, he studied the \emph{spectral moments} of random matrices—fundamental quantities that encode the distribution of eigenvalues—and used it to derive the celebrated Wigner semicircle law for Gaussian random matrices.

Since then, spectral moments have been extensively investigated across a wide range of random matrix ensembles and have found deep applications in both mathematics and physics. Notably, in \cite{HZ86}, Harer and Zagier derived a recurrence relation for the spectral moments of the Gaussian Unitary Ensemble (GUE), which they used to compute the Euler characteristic of the moduli space of algebraic curves. The spectral moments of the Gaussian Orthogonal and Symplectic Ensemble (GOE and GSE) have also been extensively studied, see e.g. \cite{Le09,MS11}. In addition, general Gaussian $\beta$-ensembles have been investigated from the perspective of symmetric function theory, particularly via Jack polynomials~\cite{GJ97,Ok97}.

Beyond the classical ensembles with Gaussian weights, spectral moments have been analysed in orthogonal polynomial ensembles \cite{HT03,CDO21,Di03,CMSV16,GGR21,CMOS19,RF21} and their $q$-deformations—often referred to as quantised ensembles in the physics literature~\cite{FLSY23,CCO20,MPS20,Le05,BFO24}. These developments have led to significant applications in diverse areas, including the theory of quantum dots and $\tau$-functions~\cite{MS11,MS12,MS13,Cu15,CMSV16,LV11}.

The moments of Hermitian ensembles possess structural properties that go beyond explicit summation expressions. For the GUE, they satisfy a second-order recurrence relation~\cite{Le05,HT03,WF14}, and their exponential generating function admits a closed-form expression in terms of a hypergeometric function~\cite{Le05,HT03}. In the case of the Laguerre Unitary Ensemble (LUE), the moments also obey a second-order recurrence~\cite{Le05,HT03,RF21}, and, upon differentiation, their exponential generating function can similarly be expressed via a hypergeometric function~\cite{HT03,Fo21}. For the Jacobi Unitary Ensemble (JUE), the moments satisfy a third-order recurrence~\cite{Le05,CMOS19,RF21}. These additional structural features are inherited by the holomorphic spectral moments of the corresponding non-Hermitian ensembles.

In recent years, increasing attention has been paid to random matrices without Hermiticity constraint, where the eigenvalues are distributed in the complex plane. This area, known as non-Hermitian random matrix theory~\cite{BF25}, takes the Ginibre ensemble—matrices with independent and identically distributed Gaussian entries—as its prototypical model. Spectral moments of such non-Hermitian ensembles, especially those with real entries (known as the real Ginibre ensemble or Ginibre orthogonal ensemble), have been actively studied~\cite{EKS94,FN07,SK09,BF24,By24b,BN25,FR09}. In these cases, the analysis must separately treat real and complex eigenvalues. In particular, even the zeroth moment of the real eigenvalue spectral density is highly nontrivial, as it corresponds to the expected number of real eigenvalues~\cite{EKS94,FN07}.

In contrast, the spectral moments of non-Hermitian random matrices in the symmetry classes corresponding to Ginibre unitary and symplectic ensemble (GinUE and GinSE), remain less explored in the literature. In this work, we aim to contribute to this direction by proposing a systematic framework for their analysis, based on recently developed techniques in the theory of planar (skew)-orthogonal polynomials.
 
\medskip 

We begin by introducing our models.
For a given weight function $\omega:\mathbb{C} \to \mathbb{R}$, we consider points of configurations $\bfs{z}=\{z_j\}_{j=1}^N$ with joint probability density functions
\begin{align}
d\P_N^{ \mathbb{C} }(\boldsymbol{z}) & = \frac{1}{Z_N^{ \mathbb{C} } } \prod_{j<k} |z_j-z_k|^2 \prod_{j=1}^N \omega(z_j) \, \,dA(z_j), \label{Gibbs complex} 
\\
d\P_N^{ \mathbb{H} }(\boldsymbol{z}) & = \frac{1}{Z_N^{ \mathbb{H} } } \prod_{j<k} |z_j-z_k|^2 |z_j-\overline{z}_k|^2 \prod_{j=1}^N |z_j-\overline{z}_j|^2 \omega(z_j) \, \,dA(z_j), \label{Gibbs symplectic}
\end{align}
where $dA(z)=d^2z/\pi$ is the area measure. Here $Z_N^{ \mathbb{C} }$ and $Z_N^{ \mathbb{H} }$ are normalisation constants known as the partition functions.
A fundamental example arises when the weight is given by $\omega(z) = e^{-|z|^2}$, in which case the ensembles~\eqref{Gibbs complex} and~\eqref{Gibbs symplectic} correspond to the eigenvalue distributions of the GinUE and GinSE, respectively~\cite{BF25}. 
In general, the ensembles~\eqref{Gibbs complex} and~\eqref{Gibbs symplectic} correspond to different two-dimensional Coulomb gases with inverse temperature $\beta = 2$~\cite{Fo10,Se24}. Furthermore, these are also known as the random normal matrix ensemble and the planar symplectic ensemble, respectively. 

\begin{defn}
 For $p_1,p_2 \in \mathbb{Z}_{ \ge 0 }$, the spectral moments of the ensembles $\bfs z$ are defined by 
\begin{equation} \label{def of spectral moments}
M^{ \mathbb{C} }_{p_1,p_2,N} := \mathbb{E}^{ \mathbb{C} }_N\bigg[\sum_{j=1}^N z_j^{p_1} \overline{z}_j^{ p_2 } \bigg], \qquad M^{ \mathbb{H} }_{p_1,p_2,N} := \mathbb{E}^{ \mathbb{H} }_N\bigg[\sum_{j=1}^N z_j^{p_1} \overline{z}_j^{ p_2 } \bigg] ,   
\end{equation}
where the expectations $\mathbb{E}^{ \mathbb{C} }_N$ and $\mathbb{E}^{ \mathbb{H} }_N$ are taken with respect to the probability measures $\P_N^{ \mathbb{C} }$ and $\P_N^{ \mathbb{H} }$, respectively. 
\end{defn}

We note that, in contrast to the Hermitian case (see \eqref{hermitian moment} below), for general non-Hermitian random matrices $X$, the right-hand side of \eqref{def of spectral moments} cannot be expressed as the expectation of $\textup{Tr}(X^{p_1}(X^{\dagger})^{p_2})$.

%In the case where \eqref{Gibbs complex} and \eqref{Gibbs symplectic} are the eigenvalue distributions of a non-Hermitian random matrix $X$, the definition \eqref{def of spectral moments} does not imply that $M_{p_1,p_2,N}=\mathbb{E}_N\Big[\text{Tr}(X^{p_1}(X^{\dagger})^{p_2})\Big]$, whereas it is true for the Hermitian case \eqref{hermitian moment}.

From the perspective of non-Hermitian random matrix theory and orthogonal polynomial theory, weight functions belonging to the following classes are of particular interest. Each weight depends on a non-Hermiticity parameter $\tau \in [0,1)$, where the limit $\tau \uparrow 1$ corresponds to a degeneration of the weight function to one defined on the real line.
\begin{itemize}
    \item \textbf{Planar Hermite weight}. For $\tau \in [0,1)$, let 
    \begin{equation} \label{def of weight H}
    \omega^{ \rm H }(z):= \exp\Big( -\frac{ |z|^2-\tau \re z^2  }{1-\tau^2} \Big).  
    \end{equation}
    In the Hermitian limit, we have 
    \begin{equation} \label{def of weight H Hermitian}
    \lim_{\tau \uparrow 1} \omega^{ \rm H }(x+iy) = \omega_\R^{ \rm H }(x) \mathbbm{1}_{ \{ y=0 \} } , \qquad \omega_\R^{ \rm H }(x):=e^{-x^2/2}.  
     \end{equation}
    \item \textbf{Planar Laguerre weight}. For $\tau \in [0,1)$ and $\nu >-1$, let
\begin{align} \label{def of weight L}
    \omega^{ \rm L }(z):= |z|^\nu K_\nu\Big(\frac{ 2|z| }{1-\tau^2} \Big) \exp\Big(\frac{2\tau}{1-\tau^2}\re z\Big), 
\end{align}
where $K_\nu$ is the modified Bessel function of the second kind \cite[Chapter 10]{NIST}.
%\begin{equation*} 
%I_\nu(z):= \sum_{k=0}^\infty \frac{(z/2)^{2k+\nu}}{k!\,\Gamma(\nu+k+1)}, \qquad K_\nu(z):= \frac{\pi}{2} \frac{I_{-\nu}(z)- I_\nu(z)}{\sin (\nu \pi)}. 
%\end{equation*}
 In the Hermitian limit, we have 
    \begin{equation}  \label{def of weight L Hermitian}
    \lim_{\tau \uparrow 1} \omega^{ \rm L }(x+iy) =   \omega_\R^{ \rm L }(x) \cdot \mathbbm{1}_{ \{ y=0 \} }  , \qquad  \omega_\R^{ \rm L }(x):= x^\nu e^{-x} \cdot \mathbbm{1}_{ \{ x>0 \} }.  
     \end{equation}
%\smallskip 
    \item \textbf{Planar Gegenbauer weight}. For  $\tau \in [0,1)$ and $a>-1$, let
\begin{equation} \label{def of weight G}
   \omega^{ \rm G }(z) = \Big( 1 - \frac{2 ( \re z)^2}{1+\tau} - \frac{2 (\im z)^2}{1-\tau} \Big)^a \cdot \mathbbm{1}_{K}(z), \qquad   K: = \Big\{ (x,y)\in \R^2  :  \frac{2x^2}{1+\tau} + \frac{2y^2}{1-\tau} \le 1 \Big\}.
\end{equation} 
In the Hermitian limit, we have 
    \begin{equation}  \label{def of weight G Hermitian}
    \lim_{\tau \uparrow 1} \omega^{ \rm G }(x+iy) =   \omega_\R^{ \rm G }(x) \cdot    \mathbbm{1}_{ \{ y=0 \} }  , \qquad  \omega_\R^{ \rm G }(x) := (1-x^2)^a  \cdot \mathbbm{1}_{ \{ |x|<1 \} }  .
     \end{equation}
\end{itemize}
 
As will be discussed below, the ensembles \eqref{Gibbs complex} and \eqref{Gibbs symplectic} with the weight functions are exactly solvable in the sense that their correlation functions can be explicitly analysed by virtue of classical orthogonal polynomials. Moreover, the ensembles \eqref{Gibbs complex} and \eqref{Gibbs symplectic} with planar Hermite and Laguerre weights provide realisations of important non-Hermitian random matrix ensembles, namely the \textit{elliptic Ginibre ensemble} and the \textit{non-Hermitian Wishart ensemble}, respectively. See \cite[Sections 2.3 and 10.5]{BF25} and \cite{BN24} and references therein. The ensemble \eqref{Gibbs complex} with the planar Gegenbauer weight has been studied in~\cite{ANPV21,NAKP20,Na24}.
In addition, the weight functions $\omega_\R^{\rm H}$, $\omega_\R^{\rm L}$, and $\omega_\R^{\rm G}$ on the real axis are classical in the theory of orthogonal polynomials, corresponding respectively to the Hermite, Laguerre, and Gegenbauer (symmetric Jacobi) weights.

In what follows, we will use the superscripts $\mathrm{H}$, $\mathrm{L}$, and $\mathrm{G}$ to indicate quantities associated with the weight functions $\omega^{\rm H}$, $\omega^{\rm L}$, and $\omega^{\rm G}$, respectively. For example,
$M_{p_1,p_2}^{\rm H, \mathbb{C}}$, $M_{p_1,p_2}^{\rm L, \mathbb{H}},$ and so on.

Note also that the elliptic Ginibre ensembles reduce to the weight $e^{-|z|^2}$ when $\tau = 0$. In this case, the ensembles \eqref{Gibbs complex} and \eqref{Gibbs symplectic} associated with $\omega^{\rm H}|_{\tau=0}$ correspond to GinUE and GinSE respectively. To indicate these specific cases, we add superscripts to the associated quantities, such as $M_{p_1,p_2}^{\rm GinSE}$.

\bigskip 

We now discuss planar orthogonal polynomials, which serve as the primary tools for the analysis carried out in this work.
Let $d\mu(z) = \omega(z)\,dA(z)$ be a measure on $\mathbb{C}$ with real moments. Then, we have an inner product on the space of polynomials with real coefficients
\begin{equation} \label{inner product}
    \langle f,g \rangle := \int_\mathbb{C} f(z) \overline{g(z)} \, d\mu(z).
\end{equation}
For a given weight function $\omega$, let $(p_k)_{ k =1 }^\infty$ be a family of \textit{monic} orthogonal polynomial satisfying
\begin{equation} \label{OP norm}
 \langle p_j,p_k \rangle = h_k \,\delta_{j,k},
\end{equation}
where $h_k$ is the squared norm and $\delta_{j,k}$ is the Kronecker delta. 
 
The weight functions $\omega^{ \rm H } $, $\omega^{ \rm L }$, and $\omega^{ \rm G }$ defined in \eqref{def of weight H}, \eqref{def of weight L}, and \eqref{def of weight G}, respectively, admit closed-form expressions for the associated planar orthogonal polynomials in terms of classical orthogonal polynomials.

\begin{itemize}
    \item \textbf{Planar Hermite polynomials}. For the weight $\omega^{ \rm H }$, it was shown in \cite{EM90,DGIL94} that the associated planar orthogonal polynomials and squared norms are given by 
\begin{align} \label{def of planar Hermite}
p_k^{ \rm H }(z) := \Big(\frac{\tau}{2}\Big)^{k/2} H_k \Big( \frac{z}{\sqrt{2\tau}} \Big), \qquad  h_k^{\rm H} := k! \sqrt{1-\tau^2},
\end{align}
where $H_k$ is the Hermite polynomial \eqref{def of Hermite}.
\smallskip 
\item \textbf{Planar Laguerre polynomials}. For the weight $\omega^{ \rm L }$, it was shown in \cite{Ak05,Ka01,Os04} that the associated planar orthogonal polynomials and squared norms are given by 
\begin{align} \label{def of planar Laguerre}
p_k^{ \rm L }(z) := (-1)^k k! \tau^k L_k^\nu\Big(\frac{z}{\tau}\Big), \qquad h_k^{ \rm L } := \frac{1-\tau^2}{2} k!\,  \Gamma(k+\nu+1), 
\end{align}
where $L^\nu_k$ is the generalised Laguerre polynomial \eqref{def of Laguerre}. 
\smallskip 
\item \textbf{Planar Gegenbauer polynomials}. For the weight $\omega^{ \rm G }$, it was shown in \cite{ANPV21} that the associated planar orthogonal polynomials and squared norms are given by 
\begin{align} \label{def of planar Gegenbauer}
p_k^{ \rm G }(z) := \frac{k!}{(1+a)_k} \Big(\frac{\sqrt{\tau}}{2}\Big)^k C^{(1+a)}_k \Big(\frac{z}{\sqrt{\tau}}\Big), \qquad h_k^{ \rm G } := \sqrt{1-\tau^2} \frac{1+a}{k+1+a} \Big( \frac{k!}{(1+a)_k} \Big)^2 \Big( \frac{\tau}{4} \Big)^k C^{(1+a)}_k\Big(\frac{1}{\tau}\Big),
\end{align}
where $C_k^{(a)}$ is the Gegenbauer (symmetric Jacobi) polynomial \eqref{def of Gegenbauer}.  Here, $(a)_k=a (a+1)\dots (a+k-1)$ is the Pochhammer symbol. 
\end{itemize}

We now turn to the classical theory of orthogonal polynomials in one real variable. To this end, consider a linear functional $\mathcal{L} : \mathbb{R}[x] \to \mathbb{R}$ defined on the space of real polynomials: for a weight $\omega_\R: \R \to \R_+$, 
\begin{equation}
\mathcal{L}(P(x))= \int_\R P(x) \omega_\R(x)\,dx. 
\end{equation} 
Then, the monic orthogonal polynomials $P_j$ are characterised by 
\begin{equation} \label{def of real OP}
\mathcal{L}( P_j(x) P_k(x) ) = \mathcal{L}( P_j(x) P_j(x) ) \, \delta_{j,k}. 
\end{equation} 

From the random matrix viewpoint, the associated ensemble of interest has the joint eigenvalue probability density function proportional to 
\begin{equation} \label{def of Hermitian beta ensemble}
\prod_{ j <k } |\lambda_j-\lambda_k|^\beta \prod_{j=1}^N \omega_\R(\lambda_j)\,d\lambda_j, \qquad \lambda_j \in \mathbb{R}.
\end{equation}
Here, $\beta > 0$ denotes the Dyson index, taking the values $1$, $2$, and $4$, corresponding to the orthogonal (O), unitary (U), and symplectic (S) ensembles, respectively \cite{Fo10}. 
For Hermitian random matrices, there is no need to distinguish between holomorphic and anti-holomorphic moments, and the spectral moments are simply defined by
\begin{equation} \label{hermitian moment}
M_{p,N}^{ \beta }:= \mathbb{E}_N^{\beta}\Big[ \sum_{j=1}^N \lambda_j^p \Big] = \mathbb{E}_N^X\Big[\text{Tr}(X^p)\Big], 
\end{equation}
where the expectation $\mathbb{E}_N^\beta$ is taken with respect to the measure \eqref{def of Hermitian beta ensemble}, and $\mathbb{E}_N^X$ with the probability measure for the corresponding matrix ensemble $X$. 

For the weight functions $\omega_\R^{\rm H}$, $\omega_\R^{\rm L}$, and $\omega_\R^{\rm G}$ given in \eqref{def of weight H Hermitian}, \eqref{def of weight L Hermitian}, and \eqref{def of weight G Hermitian}, respectively, the ensemble \eqref{def of Hermitian beta ensemble} corresponds to the eigenvalue distributions of the Gaussian, Laguerre (L), and (symmetric) Jacobi (J) ensemble. These are commonly referred to as the GOE/GUE/GSE, LOE/LUE/LSE, and JOE/JUE/JSE, depending on the value of the Dyson index~$\beta$. As before, we use superscripts to indicate the quantities associated with specific ensembles--for example, $M_{p,N}^{\rm LSE}$.

\medskip 

Next, we recall the definitions of inversion and linearisation coefficients, which have been extensively studied from a combinatorial perspective on orthogonal polynomials (see e.g. \cite{CKS16} and references therein). We also refer the reader to \cite{Ans05} for derivations of linearisation coefficients using a stochastic process approach.

\begin{defn} \label{Def_mixed and linearisation} Let $(P_j)_{j=1}^{\infty}$ be a given set of orthogonal polynomials satisfying \eqref{def of real OP}. 
\begin{itemize}
    \item The \emph{inversion coefficient} $a_{n,k}$ is defined by
\begin{equation} \label{inverse representation}
x^n = \sum_{k=0}^n  a_{n,k}  P_k(x), \qquad  a_{n,k}:= \frac{ \mathcal{L}( x^n P_k(x) ) }{ \mathcal{L}( P_k(x)^2 ) }. 
\end{equation} 
    \item The \emph{linearisation coefficient} $b_{n,m,k}$ is defined by 
    \begin{equation} \label{linearisation}
P_n(x) P_m(x) = \sum_{k=0}^{n+m} b_{n,m,k} P_k(x), \qquad b_{n,m,k}:= \frac{ \mathcal{L}( P_n(x) P_m(x) P_k(x) ) }{ \mathcal{L}( P_k(x)^2 ) }.  
\end{equation}
\end{itemize} 
\end{defn} 

One significant advantage of these quantities is that they admit explicit formulas established in the literature for a broad class of orthogonal polynomials, typically within the Askey scheme. As a result, many observables of interest can be evaluated in closed form. In particular, explicit expressions for the inversion and linearisation coefficients associated with the Hermite, Laguerre, and Gegenbauer polynomials are presented in Subsection~\ref{Subsec_OP basics}.

\section{Main results} 

In this section, we present our main results. A brief overview is as follows:

\begin{itemize}
    \item In Theorem~\ref{thm: main}, we provide a general closed-form expression for the mixed spectral moments for a certain class of weight functions. This result yields explicit formulas for exactly solvable models, see, for example, Corollary~\ref{Cor_eGinibre moments} for the elliptic Ginibre ensembles.
    \smallskip
    \item In Theorem~\ref{Thm_elliptic and MP}, we focus on the planar Hermite and Laguerre weights, for which the associated limiting spectral distributions are known in the literature. For these cases, we derive the leading-order asymptotics of the spectral moments as $N \to \infty$.
    \smallskip
    \item In Theorem~\ref{Thm_eGinibre 2}, we derive an alternative expression for the spectral moments of the elliptic Ginibre ensemble using a recently developed method \cite{LR16,BE23,BES23}. This leads to a genus-type large-$N$ expansion, as demonstrated in Theorem~\ref{Thm_eGinUSE genus expansion}.
\end{itemize}

\subsection{Spectral moments of non-Hermitian random matrix ensembles}

Throughout this work, we focus on the class of planar orthogonal polynomials $p_k$ that satisfy the three term recurrence relation:  
\begin{equation} \label{eq: recurrence}
    z p_k(z) = p_{k+1}(z) + b_k p_k(z) + c_k p_{k-1}(z).
\end{equation}
We remark that, in general, planar orthogonal polynomials do not satisfy a recurrence relation, in contrast to their real-variable counterparts. Nevertheless, several notable examples do exhibit such a structure. In fact, all known exactly solvable models in non-Hermitian random matrix theory--particularly variants of the Ginibre ensembles \cite{BF25}--possess this property, as it governs the structure described in~\eqref{pk Pk scaling} below. Another important reason for imposing such a condition is that it enables the construction of skew-orthogonal polynomials, which serve as fundamental building blocks for analysing planar symplectic ensembles under this assumption~\cite{AEP22}. (We also refer the reader to \cite{ABN24} for an alternative and more general framework for constructing skew-orthogonal polynomials.)

For $p\in\mathbb{Z}_{\ge0}$, consider the linear map
\begin{equation}
    T_p f(z) := z^p f(z).
\end{equation}
By using \eqref{eq: recurrence}, we define $ (A^p)^j_k$ as the coefficients of the expansion
\begin{equation} \label{eq: Tp}
    T_p \, p_k(z) = \sum_{j=k-p}^{k+p} (A^p)^j_k \, p_j(z), 
\end{equation}
where $(A^p)^j_k=0$ for $j<0$. Note also the trivial case $p=0$:
\begin{equation*} \label{Ap trivial}
    (A^p)^j_k = \delta_{j,k}.
\end{equation*}

\begin{lem} \label{lem: scaling} Suppose that 
\begin{equation} \label{pk Pk scaling}
p_k(z) = \alpha^k P_k\Big( \frac{z}{ \alpha }\Big), 
\end{equation}
where $P_k$ is a set of orthogonal polynomials with respect to a weighted Lebesgue measure on $\mathbb{R}$. 
Then, 
\begin{equation}
(A^p)_{k}^j = \alpha^{p+k-j} \sum_{ l= j-k 	\vee 0 }^p  a_{p,l} \, b_{l,k,j} ,  
\end{equation}
where $a_{n,k}$ and $b_{n,m,k}$ are the inversion and linearisation coefficients of $P_n$ given in \eqref{inverse representation} and \eqref{linearisation}. 
\end{lem}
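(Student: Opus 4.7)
The plan is a direct computation: rewrite $T_p p_k(z)=z^p p_k(z)$ via the scaling relation, pass to the real variable $x=z/\alpha$, apply the inversion formula to the monomial $x^p$, apply the linearisation formula to each resulting product $P_l P_k$, and then re-express the real polynomials $P_j$ back in terms of the planar polynomials $p_j$.

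More precisely, by \eqref{pk Pk scaling},
\begin{equation*}
    z^p\, p_k(z) \;=\; \alpha^{p+k}\Big(\tfrac{z}{\alpha}\Big)^p P_k\!\Big(\tfrac{z}{\alpha}\Big).
\end{equation*}
Setting $x=z/\alpha$ and applying the inversion formula \eqref{inverse representation} gives $x^p = \sum_{l=0}^p a_{p,l} P_l(x)$, so
\begin{equation*}
    x^p P_k(x) \;=\; \sum_{l=0}^p a_{p,l}\, P_l(x) P_k(x).
\end{equation*}
The linearisation formula \eqref{linearisation} then yields $P_l(x)P_k(x) = \sum_{j=0}^{l+k} b_{l,k,j} P_j(x)$. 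Inserting this and reverting to $p_j$ via $P_j(z/\alpha) = \alpha^{-j} p_j(z)$, I obtain
\begin{equation*}
    z^p p_k(z) \;=\; \sum_{l=0}^p \sum_{j=0}^{l+k} \alpha^{p+k-j}\, a_{p,l}\, b_{l,k,j}\, p_j(z).
\end{equation*}

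The last step is to interchange the order of summation and compare coefficients with the defining relation \eqref{eq: Tp}. The constraint $j\le l+k$ translates to $l\ge j-k$, which combined with $l\ge 0$ gives the lower bound $l\ge (j-k)\vee 0$; the upper bound $l\le p$ is inherited from the inversion expansion. By the uniqueness of the expansion of $T_p p_k$ in the basis $\{p_j\}$, one reads off
\begin{equation*}
    (A^p)^j_k \;=\; \alpha^{p+k-j}\sum_{l=(j-k)\vee 0}^{p} a_{p,l}\, b_{l,k,j},
\end{equation*}
which is the claim. There is no real obstacle here beyond careful index bookkeeping; the one point worth a brief sanity check is that the range of $j$ for which the right-hand side is nonzero is consistent with $k-p\le j\le k+p$ in \eqref{eq: Tp}, which follows because $b_{l,k,j}=0$ whenever $j>l+k$ (degree count) and $a_{p,l}=0$ for $l>p$, while for $j<k-p$ the degrees of the involved polynomials force the coefficient to vanish through $b_{l,k,j}=0$ for $j<|l-k|$.
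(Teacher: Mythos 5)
Your proof is correct and follows essentially the same route as the paper's: scale to the real variable, expand $x^p$ via the inversion coefficients, linearise each product $P_l P_k$, scale back, and swap the order of summation to read off $(A^p)^j_k$. The extra sanity check on the vanishing of $b_{l,k,j}$ outside $|l-k|\le j\le l+k$ is a nice touch but not needed beyond what the paper records.
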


This lemma will be shown in Subsection~\ref{Subsec_OP basics}. 
Explicit formulas for the coefficients $(A^p)_k^j$ corresponding to planar Hermite, Laguerre, and Gegenbauer polynomials are presented in Proposition~\ref{Prop_Ap jk explicit}. We note that the scaling factor $\alpha$ equals $\sqrt{\tau}$ for the planar Hermite and Gegenbauer polynomials, and $\tau$ for the planar Laguerre polynomials. Also, notice that
\begin{equation} \label{Ap proportional}
(A^p)_{k}^k = \alpha^{p} \sum_{ l= j-k 	\vee 0 }^p  a_{p,l} \, b_{l,k,j} =  (A^p)_{k}^k \Big|_{\alpha=1} \alpha^p . 
\end{equation}

To present our main results---particularly those concerning the spectral moments of planar symplectic ensembles---we introduce some additional notation, following the conventions of~\cite{AEP22}.
Let 
\begin{equation} \label{skew norm}
r_k = 2 ( h_{2k+1}-c_{2k+1} h_{2k} )
\end{equation}
and
\begin{equation} \label{mu in sop}
 \mu_{k,j} = \prod_{l=j}^{k-1} \lambda_l, \qquad \lambda_l = \frac{h_{2l+2} - c_{2l+2} h_{2l+1}}{h_{2l+1} - c_{2l+1} h_{2l}}.
\end{equation} 
In terms of these quantities, we define 
\begin{equation} \label{def of mathfrak m jk}
\begin{split}
    \mathfrak{m}_{p_1,p_2,k} := & \sum_{n=k-\lfloor\frac{p_1}{2}\rfloor}^{k+\lfloor\frac{p_1}{2}\rfloor} \frac{r_n}{r_k} (B^{p_1})^{2n+1}_{2k+1} (B^{p_2})^{2n}_{2k} + \sum_{n=k-\lfloor\frac{p_2}{2}\rfloor}^{k+\lfloor\frac{p_2}{2}\rfloor} \frac{r_n}{r_k} (B^{p_1})^{2n}_{2k} (B^{p_2})^{2n+1}_{2k+1}
    \\
    & - \sum_{n=k-\lfloor\frac{p_1+1}{2}\rfloor}^{k+\lfloor\frac{p_1+1}{2}\rfloor} \frac{r_n}{r_k} (B^{p_1})^{2n}_{2k+1} (B^{p_2})^{2n+1}_{2k} - \sum_{n=k-\lfloor\frac{p_2+1}{2}\rfloor}^{k+\lfloor\frac{p_2+1}{2}\rfloor} \frac{r_n}{r_k} (B^{p_1})^{2n+1}_{2k} (B^{p_2})^{2n}_{2k+1}, 
\end{split}
\end{equation}
where
\begin{equation}
\begin{aligned} \label{B in A}
    & (B^p)^{2n+1}_{2k+1} = (A^p)^{2n+1}_{2k+1},
    && (B^p)^{2n}_{2k+1} = (A^p)^{2n}_{2k+1} - \lambda_n (A^p)^{2n+2}_{2k+1}, \\
    & (B^p)^{2n+1}_{2k} = \sum_{j=0}^{k} \mu_{k,j} (A^p)^{2n+1}_{2j},
    && (B^p)^{2n}_{2k} = \sum_{j=0}^{k} \mu_{k,j} (A^p)^{2n}_{2j} - \lambda_n \sum_{j=0}^{k} \mu_{k,j} (A^p)^{2n+2}_{2j}.
\end{aligned}
\end{equation}
Throughout the paper, we adopt the notation $a \wedge b := \min\{a,b\}$ and $a \vee b := \max\{a,b\}$.

\begin{thm}[\textbf{Evaluation of spectral moments}] \label{thm: main} For a given weight function $\omega$, suppose that the associated planar orthogonal polynomials satisfy the three-term recurrence relation \eqref{eq: recurrence}.  
\begin{itemize}
    \item \textup{\textbf{(Spectral moments of random normal matrix ensembles)}} We have 
    \begin{equation} \label{eq: main1}
     M^{ \mathbb{C} }_{p_1,p_2,N} = \sum_{k=0}^{N-1} \sum_{j=k- p_1 \wedge p_2}^{k+ p_1 \wedge p_2} \frac{h_j}{h_k} (A^{p_1})^j_k (A^{p_2})^j_k, 
    \end{equation}
where $ (A^p)^j_k$ is given by \eqref{eq: Tp}. 
In particular, for the holomorphic moments,
\begin{align} \label{holomorphic moments}
    M^{ \mathbb{C} }_{p,0,N} = \sum_{k=0}^{N-1} (A^p)^k_k.
\end{align}
    \item \textup{\textbf{(Spectral moments of planar symplectic ensembles)}} 
We have 
\begin{align} \label{eq: main2}
\begin{split}
    M^{ \mathbb{H} }_{p_1,p_2,N} = \frac{1}{2} \sum_{k=0}^{N-1} \mathfrak{m}_{p_1,p_2,k}, 
\end{split}
\end{align}
where $\mathfrak{m}_{p_1,p_2,k}$ is given by \eqref{def of mathfrak m jk}.  
In particular, for the holomorphic moments,
\begin{equation} \label{eq: holomorpic moments}
    M^{ \mathbb{H} }_{p,0,N} = \frac{1}{2} M^{ \mathbb{C} }_{p,0,2N} - \frac{1}{2} \sum_{j=0}^{N-1} \mu_{N,j} (A^p)^{2N}_{2j}. 
\end{equation}
\end{itemize}
\end{thm}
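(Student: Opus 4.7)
The plan is to exploit the $\beta = 2$ integrable structure of both ensembles together with the three-term recurrence \eqref{eq: recurrence}. The complex ensemble \eqref{Gibbs complex} is determinantal with reproducing kernel built from the planar orthogonal polynomials $p_k$, and the symplectic ensemble \eqref{Gibbs symplectic} is Pfaffian with associated skew-orthogonal polynomials constructed as in \cite{AEP22}. In both cases the spectral moment \eqref{def of spectral moments} is a one-point observable, so the strategy is to write the one-point density as a sum over $k$ of bilinear expressions in the relevant polynomials, use \eqref{eq: Tp} to replace $z^{p_i} p_k$ by a linear combination of the $p_j$'s, and collapse the resulting sums by the orthogonality \eqref{OP norm}.

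For the complex case, the reproducing kernel gives $R^{\mathbb{C}}_N(z) = \omega(z)\sum_{k=0}^{N-1} p_k(z)\overline{p_k(z)}/h_k$, whence
\begin{equation*}
  M^{\mathbb{C}}_{p_1,p_2,N} = \sum_{k=0}^{N-1} \frac{1}{h_k} \int_{\mathbb{C}} \bigl(z^{p_1} p_k(z)\bigr) \overline{\bigl(z^{p_2} p_k(z)\bigr)} \, \omega(z) \, dA(z).
\end{equation*}
The $p_k$ have real coefficients (they come from Gram--Schmidt on real monomials against a real weight), so the conjugated factor expands as $\sum_j (A^{p_2})^j_k \overline{p_j(z)}$. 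Inserting \eqref{eq: Tp} into both factors and applying \eqref{OP norm} collapses the double sum onto $j_1 = j_2 =: j$; the summation range $k - p_1 \wedge p_2 \le j \le k + p_1 \wedge p_2$ arises by intersecting the two windows $|j - k| \le p_i$, which proves \eqref{eq: main1}. The identity \eqref{holomorphic moments} then follows from $(A^0)^j_k = \delta_{j,k}$.

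For the symplectic case, the three-term recurrence enables the construction in \cite{AEP22} of skew-orthogonal polynomials $q_{2k+1}(z) = p_{2k+1}(z)$ and $q_{2k}(z) = \sum_{j=0}^{k} \mu_{k,j}\, p_{2j}(z)$ with skew-norms $r_k$. The Pfaffian formalism gives a one-point density of the schematic form $\omega(z)(z - \bar z)\sum_k r_k^{-1}\bigl[q_{2k+1}(z)\overline{q_{2k}(z)} - q_{2k}(z)\overline{q_{2k+1}(z)}\bigr]$. One then writes $z^{p_1}\bar{z}^{p_2}(z - \bar z) = z^{p_1+1}\bar z^{p_2} - z^{p_1}\bar z^{p_2+1}$, distributes across the two antisymmetrised terms, and expands each $z^p q_a$ either via \eqref{eq: Tp} (when $a$ is odd) or by first writing $q_{2k}$ through its $\mu_{k,j}$-decomposition into $p_{2j}$'s and then applying \eqref{eq: Tp}. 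Orthogonality of the $p_k$ collapses the integrals to diagonal contributions. The four shifted coefficients in \eqref{B in A} are precisely what is produced when the $\lambda_n$-shift (from transferring a factor of $z$ across the skew bracket via the recurrence) and the $\mu_{k,j}$-expansion (from $q_{2k}$) act on either side of the antisymmetrised product, and the four sums in \eqref{def of mathfrak m jk} correspond to the four ways of routing the $z$-shift and $\bar z$-shift through that product. The holomorphic specialisation \eqref{eq: holomorpic moments} then follows by setting $p_2 = 0$: the anti-holomorphic expansion collapses via $(A^0)^j_k = \delta_{j,k}$, the $\mu_{k,j}$'s telescope through consecutive $k$'s, and one recognises $\tfrac12 M^{\mathbb{C}}_{p,0,2N} = \tfrac12 \sum_{\ell=0}^{2N-1}(A^p)^\ell_\ell$ as the main part, with the only uncancelled remainder being the upper-endpoint term $-\tfrac12 \sum_{j=0}^{N-1} \mu_{N,j}(A^p)^{2N}_{2j}$ from the telescoping.

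The main obstacle is the symplectic computation, specifically the combinatorial bookkeeping leading to \eqref{def of mathfrak m jk}: one must verify that the combinations $(B^p)^j_k$ defined in \eqref{B in A} are exactly those produced when the $\lambda_n$-shift from $(z - \bar z)$ and the $\mu_{k,j}$-expansion of $q_{2k}$ are applied simultaneously on both sides of the antisymmetric bracket. The complex case is essentially routine given orthogonality, and the holomorphic specialisation \eqref{eq: holomorpic moments}, though striking at first glance, becomes natural as a telescoping identity between the $N$-dimensional symplectic sum and the $2N$-dimensional complex sum once $q_{2k}$ is reassembled from its $\mu_{k,j}$-weighted $p_{2j}$ constituents.
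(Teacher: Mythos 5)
Your complex-case argument is exactly the paper's: expand $z^{p_i}p_k$ via \eqref{eq: Tp} inside the determinantal one-point function \eqref{1-point} and collapse with \eqref{OP norm}; the holomorphic specialisation via $(A^0)^j_k=\delta_{j,k}$ is also as in the paper. No issues there.

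For the symplectic case your skeleton (Pfaffian one-point function, skew-orthogonal polynomials from \cite{AEP22}, expansion of $z^p q_a$, collapse by orthogonality, telescoping for $p_2=0$) coincides with the paper's, but the step that actually produces \eqref{B in A} is misidentified, and this is not cosmetic. The paper defines $(B^p)^j_k$ as the matrix of $T_p$ in the skew-orthogonal basis, $T_p q_k=\sum_j (B^p)^j_k q_j$; computing it requires not only the forward decomposition $q_{2k}=\sum_j\mu_{k,j}p_{2j}$ (which you use) but also the \emph{inverse} change of basis $p_{2n}=q_{2n}-\lambda_{n-1}q_{2n-2}$ of \eqref{sop inversion}, which is the sole source of the $\lambda_n$-terms in $(B^p)^{2n}_{2k+1}$ and $(B^p)^{2n}_{2k}$. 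Your attribution of the $\lambda_n$-shift to ``transferring a factor of $z$ across the skew bracket via the recurrence'' is wrong; nothing is transferred across the bracket. Moreover, your route of distributing $(z-\bar z)$ into the monomials and collapsing with the \emph{ordinary} orthogonality \eqref{OP norm} produces a sum weighted by $h_j/r_k$ and involving the maps $T_{p_1+1}$, $T_{p_2+1}$, not the $r_n/r_k$-weighted expression \eqref{def of mathfrak m jk}; reconciling the two requires \eqref{skew norm} and amounts to redoing the basis change you skipped. The paper avoids this by keeping the factor $(\bar z-z)$ inside the pairing, using $\int_{\mathbb C}(\bar z-z)f(z)g(\bar z)\,\omega(z)\,dA(z)=\tfrac12\langle f,g\rangle_s$ (whence the overall factor $\tfrac12$ in \eqref{eq: main2}) and then invoking the skew-orthogonality \eqref{skew orthogonality}, which pairs index $2n+1$ with $2n$ and directly yields the four sums of \eqref{def of mathfrak m jk}. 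Your telescoping for \eqref{eq: holomorpic moments} is correct and matches the paper once \eqref{B in A} is in hand.
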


Since explicit formulas for the coefficients $(A^p)_k^j$ associated with the planar Hermite, Laguerre, and Gegenbauer polynomials are given in Proposition~\ref{Prop_Ap jk explicit}, Theorem~\ref{thm: main} immediately yields closed-form expressions for the corresponding spectral moments, albeit potentially lengthy. In particular, in Corollary~\ref{Cor_eGinibre moments}, we provide an explicit formula for the elliptic Ginibre ensembles.

\begin{rem}
In a paper ~\cite{MS11} by Mezzadri and Simm, the spectral moments of the GSE, LSE, and JSE are computed explicitly by exploiting the rank-one perturbation structure of the correlation kernel, previously developed in~\cite{AFNV00,Wi99}. These results can be recovered by taking the Hermitian limit of~\eqref{eq: holomorpic moments}.
%In \cite[Eq.(88)]{MS11}, the spectral moment of quaternionic Hermitian matrix is related to that of complex Hermitian matrix exactly in the form of \eqref{eq: holomorpic moments}, although the correction term appears in a different form.
\end{rem}

%\begin{rem}
%We also have 
%\begin{equation*}
%\frac{  \mathcal{L}(  x^p \,P_j(x)^2) }{  \mathcal{L}( P_j(x)^2 )  } = \sum_{l=0}^p  \frac{\mathcal{L}( x^p P_l(x) ) \,\mathcal{L}( P_j(x)^2 P_l(x) ) }{ \mathcal{L}( P_l(x)^2 )  }   = \sum_{l=0}^p  a_{p,l} \, b_{j,j,l}.
%\end{equation*}
%\end{rem}

Combining the observation \eqref{Ap proportional} with Theorem~\ref{thm: main} and Lemma~\ref{lem: scaling}, we obtain the following corollary as an immediate consequence.

\begin{cor}[\textbf{Relation to Hermitian ensembles:}] \label{Cor_holo moments}
Suppose that the assumption of Lemma~\ref{lem: scaling} holds. Let $\omega_\R$ be the weight of the orthogonal polynomial $P_k$ on $\R$. We denote by $M^{ \R }_{p,N} \equiv M^{ \R }_{p,N}(\omega_{\R})$ the associated spectral moment of Hermitian unitary ensemble. 
Then we have 
\begin{equation}
  M^{ \mathbb{C} }_{p,0,N} =  \alpha^p  M^{ \R }_{p,N}. 
\end{equation}
In particular, we have
\begin{equation}
 M_{2p,0,N}^{ \rm{ H } , \mathbb{C}} =  \tau^p M_{2p,N}^{ \rm GUE }, \qquad   M_{p,0,N}^{ \rm {L} , \mathbb{C}} =  \tau^p M_{p,N}^{ \rm LUE }, \qquad    M_{2p,0,N}^{ \rm {G} , \mathbb{C}} =  \tau^p M_{2p,N}^{ \rm JUE }. 
\end{equation} 
%Note that the JUE we consider here is the symmetric case $\text{JUE}^{(a_1,a_2)}=\text{JUE}^{(a+\frac{1}{2},a+\frac{1}{2})}$, where only the even moments are non-zero.
\end{cor}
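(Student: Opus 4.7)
The plan is to combine the holomorphic formula \eqref{holomorphic moments} from Theorem~\ref{thm: main} with the scaling identity \eqref{Ap proportional} and the standard Christoffel--Darboux representation of the spectral moments of Hermitian unitary ensembles. Starting from
\begin{equation*}
M^{\mathbb{C}}_{p,0,N} \;=\; \sum_{k=0}^{N-1} (A^p)^k_k,
\end{equation*}
I would first specialise Lemma~\ref{lem: scaling} to the diagonal $j=k$ (so the summation range collapses to $0\le l\le p$) to obtain
\begin{equation*}
(A^p)^k_k \;=\; \alpha^p \sum_{l=0}^p a_{p,l}\, b_{l,k,k},
\end{equation*}
thereby extracting the global factor $\alpha^p$ as in \eqref{Ap proportional}.

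Next, I would re-interpret the sum $\sum_l a_{p,l}\,b_{l,k,k}$ as a ratio of moments of the underlying real measure. Using the defining expansions \eqref{inverse representation} and \eqref{linearisation} inside the linear functional $\mathcal{L}$ associated with $\omega_\R$, one computes
\begin{equation*}
\mathcal{L}\!\left(x^p\, P_k(x)^2\right) \;=\; \sum_{l=0}^p a_{p,l}\, \mathcal{L}(P_l(x) P_k(x)^2) \;=\; \Bigl(\sum_{l=0}^p a_{p,l}\, b_{l,k,k}\Bigr)\, \mathcal{L}(P_k^2),
\end{equation*}
where the last equality follows by expanding $P_l P_k = \sum_j b_{l,k,j} P_j$ and using orthogonality of $P_j$ against $P_k$. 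Hence
\begin{equation*}
(A^p)^k_k \;=\; \alpha^p\, \frac{\mathcal{L}(x^p\, P_k^2)}{\mathcal{L}(P_k^2)}.
\end{equation*}

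The Hermitian unitary ensemble with weight $\omega_\R$ has one-point density $\sum_{k=0}^{N-1} P_k(x)^2\,\omega_\R(x)/\mathcal{L}(P_k^2)$, so by integration against $x^p$ one has
\begin{equation*}
M^{\R}_{p,N} \;=\; \sum_{k=0}^{N-1} \frac{\mathcal{L}(x^p\, P_k^2)}{\mathcal{L}(P_k^2)}.
\end{equation*}
Summing the identity for $(A^p)^k_k$ over $k$ then yields the claimed relation $M^{\mathbb{C}}_{p,0,N} = \alpha^p\, M^{\R}_{p,N}$.

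For the three concrete cases, I would read off the scaling $\alpha$ from the explicit forms \eqref{def of planar Hermite}, \eqref{def of planar Laguerre}, \eqref{def of planar Gegenbauer}: namely $\alpha = \sqrt{\tau}$ for the planar Hermite and Gegenbauer polynomials and $\alpha = \tau$ for the planar Laguerre polynomials. One must also check that the real orthogonal polynomials $P_k$ appearing in the scaling \eqref{pk Pk scaling}, after the change of variable built into \eqref{def of planar Hermite}--\eqref{def of planar Gegenbauer}, really are the classical Hermite/Laguerre/Gegenbauer polynomials associated with GUE/LUE/JUE in the conventions \eqref{def of weight H Hermitian}, \eqref{def of weight L Hermitian}, \eqref{def of weight G Hermitian}; any surviving rescaling in $x$ can then be absorbed into the $\alpha^p$ factor. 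For odd $p$ in the Hermite and Gegenbauer cases, the parity of $\omega_\R$ forces both sides to vanish, which is consistent with stating only even orders $2p$. I expect the main technical obstacle to be precisely this bookkeeping of normalisations and variable rescalings between the planar and real classical polynomials; once this is handled, the rest is a three-line calculation.
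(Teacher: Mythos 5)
Your argument is correct and is exactly the chain of reasoning the paper intends (the corollary is stated as an immediate consequence of \eqref{holomorphic moments}, \eqref{Ap proportional} and Lemma~\ref{lem: scaling}, with no further proof given); your identification of $\sum_{l}a_{p,l}b_{l,k,k}$ with $\mathcal{L}(x^pP_k^2)/\mathcal{L}(P_k^2)$ and hence of $\sum_{k=0}^{N-1}(A^p)^k_k\big|_{\alpha=1}$ with the Christoffel--Darboux form of $M^{\R}_{p,N}$ correctly fills in the one step the paper leaves implicit. The convention checks for the three classical cases (reading off $\alpha=\sqrt{\tau}$, $\tau$, $\sqrt{\tau}$ from \eqref{def of planar Hermite}--\eqref{def of planar Gegenbauer}) are also as intended.
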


This corollary shows that the holomorphic spectral moments of the non-Hermitian ensembles coincide with those of their Hermitian limits, up to a simple scaling. A similar structure was also recently observed in \cite{SK24} in the context of the spectral form factor of the complex elliptic Ginibre ensemble.

\subsection{Elliptic law and non-Hermitian Marchenko-Pastur law}

Next, we discuss the large-$N$ behaviour of the spectral moments. 
The limiting empirical measure of the ensembles \eqref{Gibbs complex} and \eqref{Gibbs symplectic}, after suitable normalisation ensuring that the limiting support is a compact subset of the complex plane, has been identified in the literature for certain classes of weight functions.
In particular, for the planar Hermite weight $\omega^{ \rm H }$, the limiting measure is explicitly given by the elliptic law
\begin{equation} \label{def of ellipitc law}
\frac{1}{1-\tau^2} \mathbbm{1}_{ S }(z)\,dA(z),  \qquad 
 S :=  \Big\{ (x,y) \in \R^2: \Big( \frac{x}{1+\tau}\Big)^2 + \Big( \frac{y}{1-\tau} \Big)^2 \le 1 \Big\}. 
\end{equation}  
%This is referred to as the elliptic law in the context of non-Hermitian random matrix theory.
On the other hand, for the planar Laguerre weight, it was shown in \cite{ABK21} that the limiting measure is given by
\begin{equation} \label{eq: non-Hermitian MP law}
\frac{1}{1-\tau^2} \frac{1}{\sqrt{4|z|^2 + (1-\tau^2)^2 \alpha^2}} \mathbbm{1}_{ \hat{S} }(z)\, dA(z),  
\end{equation}
where  
\begin{equation}
    \hat{S} : = \Big\{ (x,y) \in \R^2: \Big( \frac{x-\tau(2+\alpha)}{(1+\tau^2)\sqrt{1+\alpha}} \Big)^2 + \Big( \frac{y}{(1-\tau^2)\sqrt{1+\alpha}} \Big)^2 \le 1 \Big\}.
\end{equation}
This is referred to as a non-Hermitian extension of the Marchenko-Pastur law. The limiting measure for the planar Gegenbauer weight has not been fully identified except for a special case involving the Chebyshev polynomials of the second kind, see \cite{NAKP20}.
% This implies
% \begin{equation}
%   \frac{1}{1-\tau^2} \int_{\hat{S}_\alpha} z^{p_1} \overline{z}^{p_2} \frac{1}{\sqrt{4|z|^2 + (1-\tau^2)^2 \alpha^2}} dA(z) = \lim_{N\to\infty} \frac{1}{N^{p_1+p_2+1}} M^{L,\mathbb{C}/\mathbb{H}}_{p_1,p_2,N}.
% \end{equation}

\begin{thm}[\textbf{Large $N$-limit of spectral moments in the elliptic Ginibre and non-Hermitian Wishart ensembles}] \label{Thm_elliptic and MP}
We have the following.
\begin{itemize}
    \item[\textup (i)] %For both the complex and symplectic elliptic Ginibre ensembles, we have
    We have for elliptic Ginibre ensembles
\begin{equation} \label{elliptic leading order}
\lim_{ N \to \infty } \frac{ 1 }{ N^{ \frac{p_1+p_2}{2}+1 } } \,M^{\rm H, \mathbb{C} }_{p_1,p_2,N}=   \lim_{ N \to \infty }  \frac{1}{ 2^{ \frac{p_1+p_2}{2} } N^{\frac{p_1+p_2}{2}+1}} M^{ \rm{H} ,\mathbb{H}}_{p_1,p_2,N}
=    C_1(p_1,p_2),  
\end{equation}
where 
\begin{equation} \label{def of C1 p1 p2}
 C_1(p_1,p_2):= \frac{1}{  \frac{p_1+p_2}{2}+1  }  \sum_{r\in\mathcal{I}_{p_1 \wedge p_2}} \tau^{ \frac{p_1+p_2}{2} + r } \binom{p_1}{ \frac{p_1+r}{2} } \binom{p_2}{ \frac{p_2+r}{2} }.
\end{equation}
Here,
\begin{equation} \label{I_p def}
        \mathcal{I}_{p} := \{ -p, -p+2, -p+4, \cdots, p-2, p \}.
    \end{equation} 
   \item[\textup (ii)] Suppose that  
\begin{equation}
    \lim_{ N \to \infty } \frac{\nu}{N} = \alpha\ge0.
\end{equation} 
Then we have for non-Hermitian Wishart ensembles
    \begin{equation} \label{shifted elliptic leading order}
    \lim_{N\to\infty} \frac{1}{N^{p_1+p_2+1}} M^{\rm L, \mathbb{C} }_{p_1,p_2,N}  =  \lim_{N\to\infty} \frac{1}{2^{p_1+p_2} N^{p_1+p_2+1}} M^{\rm L, \mathbb{H} }_{p_1,p_2,N}  =   L_1(p_1,p_2) ,  
    \end{equation}
    where 
    \begin{equation} \label{def of L1 p1 p2}
     L_1(p_1,p_2) := \sum_{r=-p_1 \wedge p_2}^{p_1 \wedge p_2} \sum_{l_1=0}^{p_1} \sum_{l_2=0}^{p_2} \tau^{ p_1+p_2+2r } \frac{\alpha^{p_1+p_2-l_1-l_2}}{l_1+l_2+1} \binom{p_1}{l_1} \binom{p_1+l_1}{l_1+r} \binom{p_2}{l_2} \binom{p_2+l_2}{l_2-r}.
    \end{equation}
\end{itemize}
\end{thm}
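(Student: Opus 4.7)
The plan is to specialize the exact formulas in Theorem~\ref{thm: main} to the planar Hermite and Laguerre weights, insert the explicit expressions for $(A^p)^j_k$ from Proposition~\ref{Prop_Ap jk explicit}, and extract the leading-order behaviour of the resulting (double) sums by a Riemann-sum argument.

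For part (i), I begin from \eqref{eq: main1}: writing $j=k+r$ with $|r|\le p_1\wedge p_2$, the summand is $(h_{k+r}/h_k)(A^{p_1})^{k+r}_k(A^{p_2})^{k+r}_k$. Using $h_k^{\rm H}=k!\sqrt{1-\tau^2}$ and Lemma~\ref{lem: scaling} with $\alpha=\sqrt{\tau}$, combined with the classical inversion and linearisation coefficients of the Hermite polynomials, one obtains $(A^{p})^{k+r}_k$ as an explicit binomial-type expression in $k$ with leading behaviour $\tau^{(p+r)/2}\binom{p}{(p+r)/2} k^{(p-r)/2}\,[1+O(1/k)]$; the parity constraint forces $r$ to lie in $\mathcal{I}_{p_1\wedge p_2}$. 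Therefore the full summand is $\tau^{(p_1+p_2)/2+r}\binom{p_1}{(p_1+r)/2}\binom{p_2}{(p_2+r)/2}\, k^{(p_1+p_2)/2}\bigl(1+O(1/k)\bigr)$, and the sum $\sum_{k=0}^{N-1}k^{(p_1+p_2)/2}$ is a Riemann sum for $\int_0^1 x^{(p_1+p_2)/2}\,dx=1/(\tfrac{p_1+p_2}{2}+1)$ after dividing by $N^{(p_1+p_2)/2+1}$. Summing over $r$ produces $C_1(p_1,p_2)$.

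For the symplectic part of (i), I use \eqref{eq: main2} and \eqref{def of mathfrak m jk}. The recurrence coefficients for $\omega^{\rm H}$ give $c_{k}=k\tau/2$ and $b_k=0$, so $\lambda_l\to 1$ and $\mu_{k,j}\to 1$ as $k\to\infty$ with bounded $k-j$; thus each $(B^p)^{2n}_{2k}$ differs from the corresponding weighted sum of $(A^p)$'s only through factors that are asymptotically constant. The four terms in \eqref{def of mathfrak m jk} combine, after substituting the explicit Hermite coefficients (whose parity structure makes only the even--even or odd--odd pairings leading), to a quantity that to leading order equals $2^{(p_1+p_2)/2+1}$ times the corresponding integrand for the complex case evaluated at $2k$. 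Passing to the Riemann sum yields the asserted factor $2^{(p_1+p_2)/2}$ relative to \eqref{eq: main1}. (The same factor can be seen directly: the symplectic eigenvalues satisfy the effective law $\tfrac12(\text{GinUE at }2N)$, consistent with \eqref{eq: holomorpic moments}.)

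Part (ii) proceeds in parallel: the Laguerre inversion $x^n=\sum_{l}\binom{n}{l}(-1)^l(n-l)!\,L^\nu_{l}(x)$-type identity and the linearisation coefficients for $L^\nu_k$ give an explicit $(A^{p})^{k+r}_k$ that, with $\alpha=\tau$ and $h_k^{\rm L}\propto k!\,\Gamma(k+\nu+1)$, produces a triple sum over $l_1,l_2,r$ after expanding each $A^{p_i}$ via its inversion coefficients. Using $\nu/N\to\alpha$, each such summand is of order $N^{p_1+p_2-l_1-l_2}$ times $k^{l_1+l_2}$, and the sum over $k$ gives $\int_0^1 x^{l_1+l_2}dx = 1/(l_1+l_2+1)$; this is precisely $L_1(p_1,p_2)$. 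The symplectic version follows by the same comparison of \eqref{eq: main2} with \eqref{eq: main1}, now contributing a factor $2^{p_1+p_2}$ because the Laguerre scaling variable $z/\tau$ doubles under $k\mapsto 2k$ while $h_k$ carries an extra $\Gamma(k+\nu+1)$, which changes the polynomial order in $k$ relative to the Hermite case.

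The main obstacle is the symplectic bookkeeping: tracking the four sums in \eqref{def of mathfrak m jk} and verifying that the perturbations $B^p-A^p$ coming from $\lambda_n$ and $\mu_{k,j}$ are subleading uniformly in the (bounded) shift parameter $n-k$, so that only the diagonal-parity terms survive and reproduce the complex-case sum with the correct power of $2$. A clean way to organize this is to separate the four double sums into leading + remainder and use the recurrence data $b_k=0$, $c_k=k\tau/2$ (Hermite) or $b_k=\tau(k+\nu+k+1)$, $c_k=\tau^2 k(k+\nu)$ (Laguerre) to give sharp bounds on $\lambda_l-1$ and $\mu_{k,j}-1$.
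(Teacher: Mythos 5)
The complex-ensemble halves of both (i) and (ii) follow essentially the paper's route: specialise \eqref{eq: main1}, insert the asymptotics of $(A^p)^{k-r}_k$ (Lemma~\ref{lem: asymptotics}), and sum over $k$. Your index bookkeeping in (i) is internally inconsistent (you pair $\tau^{(p+r)/2}\binom{p}{(p+r)/2}$ with $k^{(p-r)/2}$, mixing the two sign conventions for $r$), but since $\mathcal{I}_p$ and $\binom{p}{(p\pm r)/2}$ are symmetric in $r$ this washes out. In (ii) you also skip the resummation over the inner index $s$ of \eqref{Ap asymp laguerre} into the binomial form $\binom{p}{l}\binom{p+l}{p\mp r}$, which the paper handles by extracting $[x^{p-r}]\big(k(x+1)^2+\nu(x+1)\big)^p$ in two ways; this is a nontrivial step, not an automatic consequence of "expanding via inversion coefficients".

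The genuine gap is in the symplectic halves. Your argument rests on the claim that $\lambda_l\to1$ and $\mu_{k,j}\to1$, so that $B^p-A^p$ is subleading. This is false. For the Hermite weight the recurrence \eqref{hermite recurrence} gives $c_k=k\tau$ (not $k\tau/2$), hence from \eqref{mu in sop} one computes $\lambda_l=\dfrac{(2l+2)!\,(1-\tau)}{(2l+1)!\,(1-\tau)}=2l+2$ and $\mu_{k,j}=(2k)!!/(2j)!!$, both of which diverge; for the Laguerre weight $\lambda_{k-r}\asymp k(k+\nu)$. Moreover the correction terms in \eqref{B in A} are \emph{not} subleading: $\lambda_n(A^p)^{2n+2}_{2k+1}$ is of the same order in $k$ as $(A^p)^{2n}_{2k+1}$, because the extra factor $\lambda_n\asymp k$ exactly compensates the drop in order caused by shifting the upper index by $2$. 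Consequently the four sums in \eqref{def of mathfrak m jk} cannot be reduced to the complex-case sum by discarding remainders; one must exhibit the cancellations explicitly. The paper does this for the Laguerre case by a delicate telescoping of the second line of \eqref{def of mathfrak m jk} over the shifted indices $r\mapsto r+1$, $s\mapsto s+1$, which leaves a single surviving term that recombines with the first line to give $\sum_r\mathfrak{b}_{p_1,r}\mathfrak{c}_{p_2,r}$. For the Hermite symplectic case the paper avoids \eqref{def of mathfrak m jk} altogether and instead deduces the leading order from the differential-operator identity \eqref{Thm 2.5_2} of Theorem~\ref{Thm_eGinibre 2}(ii), where the relation $M^{\rm H,\mathbb{H}}_{p_1,p_2,N}\approx\tfrac12 M^{\rm H,\mathbb{C}}_{p_1,p_2,2N}$ (your heuristic) becomes an exact identity up to correction terms that are then shown to be $O(N^{(p_1+p_2)/2})$. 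As written, your proof of the factor $2^{(p_1+p_2)/2}$ (resp.\ $2^{p_1+p_2}$) does not go through.
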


\begin{rem}
The leading-order spectral moments can be used to derive the limiting spectral distribution via the conformal mapping method, particularly when analysing the Cauchy transform of the equilibrium measure, see \cite{ABK21} and \cite[Remark~2.5]{By24}. In connection with the elliptic law \eqref{def of ellipitc law}, it follows that
\begin{equation} \label{eq_elliptic law moment}
\frac{1}{1-\tau^2} \int_S z^{p_1} \overline{z}^{p_2} \, dA(z) 
= C_1(p_1, p_2).
\end{equation}
In Appendix~\ref{Appendix_moment elliptic}, we provide a direct computation of this identity using the conformal map and the Schwarz function.

Similarly, by virtue of the non-Hermitian Marchenko--Pastur law \eqref{eq: non-Hermitian MP law}, we obtain
\begin{equation} \label{eq_non-Hermitian MP law moment}
\frac{1}{1-\tau^2} \int_{\hat{S}} \frac{z^{p_1} \overline{z}^{p_2}}{\sqrt{4|z|^2 + (1 - \tau^2)^2 \alpha^2}} \, dA(z) = L_1(p_1, p_2).
\end{equation}
Unlike in the case of the elliptic law, a similar computation in Appendix~\ref{Appendix_moment elliptic} yields a more intricate expression for this integral. This, in turn, highlights that computing spectral moments provides a simpler means of evaluating such integrals.

%We give in Theorem \ref{Thm_elliptic and MP} results for only the elliptic Ginibre and non-Hermitian Wishart ensemble, not for the planar Gegenbauer ensemble. 
%In principle, the same kind of computations by using purely algebraic manupulations can be made but this results in a long complicated formulas. Indeed, unlike the other two models, for the planar Gegenbauer case, the limiting global measure is not establihsed in the literature. Therefore one can not make the consistent check \eqref{eq_elliptic law moment} and \eqref{eq_non-Hermitian MP law moment}. Indeed, we observed that such a lengthy combinatorial formulas can be simplified by the knowledge of the limiting distribution, which cannot be made for the Gegenbauer case.  

In Theorem~\ref{Thm_elliptic and MP}, we present results only for the elliptic Ginibre and non-Hermitian Wishart ensembles, and not for the planar Gegenbauer ensemble. In principle, analogous computations can be carried out using similar algebraic manipulations; however, this leads to particularly lengthy and complicated formulas in the Gegenbauer case. 
Indeed, unlike the former two models, the limiting global measure for the planar Gegenbauer ensemble has not yet been established in the literature; this will be addressed in forthcoming work. As a consequence, one cannot perform the same consistency checks as in \eqref{eq_elliptic law moment} and \eqref{eq_non-Hermitian MP law moment} for the former two models, which is one of the reasons we do not include the corresponding formulas here. 
%A reason is that we did not manage to obtain simple closed form expressions. This relates to the absence of understanding on the limiting measure for the case of planar Gegenbauer weight as mentioned earlier.
\end{rem}

\begin{ex}
The explicit formulas of $C_1(p_1,p_2)$ for the first few values of $p_1$ and $p_2$ are as follows. 
\begin{align*} 
    & C_1(0,0) = 1 && C_1(2,0) = \tau, && C_1(1,1) = \tfrac{1}{2}\tau^2+\tfrac{1}{2}, \\
    & C_1(4,0) = 2\tau^2, && C_1(3,1) = \tau^3+\tau, && C_1(2,2) = \tfrac{1}{3}\tau^4+\tfrac{4}{3}\tau^2+\tfrac{1}{3}, \\
    & C_1(6,0) = 5\tau^3, && C_1(5,1) = \tfrac{5}{2}\tau^4+\tfrac{5}{2}\tau^3 && C_1(4,2) = \tau^5+3\tau^3+\tau, && C_1(3,3) = \tfrac{1}{4}\tau^6+\tfrac{9}{4}\tau^4+\tfrac{9}{4}\tau^2+\tfrac{1}{4}, \\
    & C_1(8,0) = 14\tau^4, && C_1(7,1) = 7\tau^5+7\tau^3, && C_1(6,2) = 3\tau^6+8\tau^4+3\tau^2, && C_1(5,3) = \tau^7+6\tau^5+6\tau^3+\tau.
    % & C_1(4,4) = \frac{1}{5}\tau^8+\frac{16}{5}\tau^6+\frac{36}{5}\tau^4+\frac{16}{5}\tau^2+\frac{1}{5}.
\end{align*}
Observe here that as a polynomials in $\tau$, the sum of coefficients is same as the $p$-th Catalan number $C_p=\frac{1}{p+1}\binom{2p}{p}$ with $p=(p_1+p_2)/2$. 

In addition, explicit formulas for $L_1(p_1, p_2)$ for the first few values of $p_1$ and $p_2$ are provided below, where we set $\gamma = 1 + \alpha$.
\begin{align*}
  %  & L_1(0,0) = 1, \\
    & L_1(0,0) = 1 && L_1(1,0) = \gamma\tau, \\
    & L_1(2,0) =  (\gamma^2+\gamma )\tau, && L_1(1,1) = (\tfrac{1}{2}\gamma-\tfrac{1}{6})\tau^4 + (\gamma^2+\tfrac{1}{3})\tau^2 + (\tfrac{1}{2}\gamma-\tfrac{1}{6}), 
    \\
    &L_1(3,0) = (\gamma^3+3\gamma+\gamma)\tau^3 && L_2(2,1) = \gamma^2\tau^5 + (\gamma^3+\gamma^2+\gamma)\tau^3 + \gamma^2\tau, \\
    &L_1(4,0) =  (\gamma^4+6\gamma^3+6\gamma+\gamma )\tau^4 && L_2(3,1) = (\tfrac{3}{2}\gamma^3+\tfrac{3}{2}\gamma )\tau^6 + (\gamma^4+3\gamma^3+3\gamma^2+\gamma)\tau^4 + (\tfrac{3}{2}\gamma^3+\tfrac{3}{2}\gamma )\tau^2.
\end{align*}
\end{ex}

\begin{rem}
We now consider the Hermitian limit $\tau \to 1$, in which $C_1$ and $L_1$ converge to moments of the semicircle and Marchenko--Pastur laws corresponding to the GUE and LUE, respectively.
\begin{itemize}
    \item  Without loss of generality, suppose $p_1\ge p_2.$ Let $p=(p_1+p_2)/2.$ Then we have 
\begin{equation}
\begin{split} 
   C_1(p_1,p_2)\Big|_{ \tau=1 }  %& = \frac{1}{p+1} \sum_{r\in\mathcal{I}_{p_1 \wedge p_2}} \binom{p_1}{ \frac{p_1+r}{2} } \binom{p_2}{ \frac{p_2-r}{2} }
    %\\
    & = \frac{1}{p+1} \sum_{s=0}^{p_1} \binom{p_1}{ s } \binom{p_2}{p-s}
   = \frac{1}{p+1} \binom{2p}{p}  = C_p, 
\end{split}
\end{equation}
where $C_p$ is the $p$-th Catalan number. 
%This is consistent with the fact that the Hermitian limit of the elliptic law is the semicircle law.
   \smallskip 
   \item Let $p=p_1+p_2.$ In the Hermitian limit, we have
\begin{equation} \label{L1 p1p2 tau1 Narayana}
 L_1(p_1,p_2)  \Big|_{ \tau=1 } = N_p(1+\alpha) := \sum_{k=1}^{p} \frac{1}{p}\binom{p}{k}\binom{p}{k-1} (1+\alpha)^k, 
\end{equation}
where $N_p$ is called the Narayana polynomial \cite{MS08}, see Appendix~\ref{Appendix_combinatorial identities}. 
\end{itemize}
\end{rem}

\begin{rem}
When $\alpha = 0$, we have
\begin{equation}
    L_1(p_1,p_2) \big|_{\alpha=0} = \frac{1}{p_1 + p_2 + 1} \sum_{r = -p_1 \wedge p_2}^{p_1 \wedge p_2} \tau^{p_1 + p_2 + 2r} \binom{2p_1}{p_1 + r} \binom{2p_2}{p_2 - r} = C_1(2p_1, 2p_2).
\end{equation}
This is consistent with the fact that the non-Hermitian Marchenko--Pastur law~\eqref{eq: non-Hermitian MP law} with $\alpha = 0$ coincides with the elliptic law~\eqref{def of ellipitc law} under the transformation $z \mapsto z^2$, see~\cite{ABK21}.
\end{rem}

\begin{comment}

\begin{ex}
We consider some special cases. 
\begin{itemize}
\item $(p_1=p_2=p)$. In this case, we have 
\begin{equation*}
   C_1(p,p)= \frac{1}{p+1} \sum_{l=0}^{p} \tau^{2l } \binom{p}{ l }^2  .
\end{equation*} 
\end{itemize}
\end{ex}

\begin{ex}
We consider some special cases. 
\begin{itemize} 
\item $(\tau=0)$. In this case, we have $L_1(p_1,p_2)=0$ if $p_1\ne p_2.$ If $p_1=p_2=p$, then
\begin{equation*}
    L_1(p,p) \Big|_{ \tau=0 } = \sum_{l_1=0}^{p} \sum_{l_2=0}^{p} \frac{\alpha^{2p-l_1-l_2}}{l_1+l_2+1} \binom{p}{l_1} \binom{p+l_1}{l_1-p} \binom{p}{l_2} \binom{p+l_2}{l_2+p} = \sum_{l=0}^{p} \frac{\alpha^{p-l}}{p+l+1} \binom{p}{l}.
\end{equation*}
%In fact, it is exactly the Narayana polynomial \cite{CR15}
%\begin{equation*}
%   L_1(p_1,p_2)  \Big|_{ \tau=1 }  = N_p(\gamma) := \sum_{k=1}^{p} N(p,k) \gamma^k,
%\end{equation*}
%where $\gamma=1+\alpha$ and $N(p,k)=\frac{1}{p}\binom{p}{k}\binom{p}{k-1}$ are the Narayana numbers.
\end{itemize}
\end{ex}

\end{comment}

\subsection{Spectral moments of the complex and symplectic elliptic Ginibre ensemble}

For the elliptic Ginibre ensembles, it was shown in \cite{LR16} for the complex case and in \cite{BE23} for the symplectic case that the associated correlation kernels exhibit properties analogous to the classical Christoffel--Darboux formula. More precisely, after applying an appropriate differential operator, the correlation kernel of the complex ensemble—forming a determinantal point process—can be expressed in terms of a few orthogonal polynomials of the highest degrees. Furthermore, when a similar differential operator is applied to the kernel of the symplectic ensemble—which forms a Pfaffian point process—it can be written in terms of the complex kernel, supplemented by a correction term. This structure is reminiscent of the rank-one perturbation relation between the GSE and GUE correlation kernels; see \cite{AFNV00,Wi99,Bo25}.

These structural properties allow us to derive alternative formulas for the spectral moments, which share similar features with those in Corollary~\ref{Cor_eGinibre moments}, while offering additional advantages for the asymptotic analysis.

\begin{thm}[\textbf{Spectral moments of the complex and symplectic elliptic Ginibre ensemble}] \label{Thm_eGinibre 2}
Let $(A^p)_k^j$ denote the coefficients defined in~\eqref{eq: Tp} corresponding to the Hermite weight $\omega^{\rm H}$, whose explicit expression is given in~\eqref{Ap Hermite}.
\begin{itemize}
    \item[\textup (i)] We have for the complex ensemble
    \begin{equation} \label{Thm 2.5_1}
    M^{ \rm{H} ,\mathbb{C}}_{p_1,p_2,N} = \frac{1}{1-\tau^2} \frac{1}{p_1+1} \sum_{n=N-1-(p_1+1)\vee p_2}^{N+(p_1+1)\vee p_2} \frac{n!}{(N-1)!} \Big[ (A^{p_1+1})^n_{N-1} (A^{p_2})^n_N - \tau (A^{p_1+1})^n_N (A^{p_2})^n_{N-1} \Big].
\end{equation}
    \item[\textup (ii)] We have for the symplectic ensemble
\begin{equation} 
\begin{split}  \label{Thm 2.5_2}
    M^{\rm{H},\mathbb{H}}_{p_1,p_2,N}
    & = \frac{1}{2} M^{ \rm{H} ,\mathbb{C}}_{p_1,p_2,2N} + (1-\tau^2) \frac{ p_1 p_2 }{ p_1 + p_2 } M^{ \rm{H} ,\mathbb{H}}_{p_1-1,p_2-1,N} 
    \\
    & \quad - \frac{1}{2 } \sum_{k=0}^{N-1} \sum_{n=2N - p_1 \vee p_2}^{2N + p_1 \vee p_2} \frac{n!}{(2N)!} \frac{(2N)!!}{(2k)!!} \bigg[ \frac{p_1}{p_1+p_2} (A^{p_1})^n_{2k} (A^{p_2})^n_{2N} + \frac{p_2}{p_1+p_2} (A^{p_1})^n_{2N} (A^{p_2})^n_{2k} \bigg].
\end{split}
\end{equation}
\end{itemize}
\end{thm}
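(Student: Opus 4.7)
My plan is to prove both statements via the differential-operator method for planar orthogonal polynomial ensembles developed in \cite{LR16,BE23,BES23}, which is a non-Hermitian analogue of the classical Christoffel--Darboux technique. The common template is: (a)~derive a planar CD-type identity that collapses the full kernel sum into a boundary expression at the highest indices; (b)~use it to express $\partial_z$ (or a combination of $\partial_z,\partial_{\bar z}$) of the relevant weighted kernel in closed form; (c)~extract the moment via integration by parts against $z^{p_1}\bar z^{p_2}$; (d)~expand the resulting boundary integrals via the coefficients $(A^p)^j_k$ through orthogonality.

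For part (i), starting from $M^{\rm H,\mathbb{C}}_{p_1,p_2,N}=\int z^{p_1}\bar z^{p_2}\tilde K_N(z,\bar z)\omega^{\rm H}(z)\,dA$ with $\tilde K_N(z,w)=\sum_{k=0}^{N-1}p^{\rm H}_k(z)\overline{p^{\rm H}_k(w)}/h^{\rm H}_k$, I would combine the three-term recurrence $zp^{\rm H}_k=p^{\rm H}_{k+1}+\tau k\,p^{\rm H}_{k-1}$ with the differentiation relation $\partial_z p^{\rm H}_k=k\,p^{\rm H}_{k-1}$ and the shift $h^{\rm H}_{k-1}=h^{\rm H}_k/k$; telescoping the sum in $k$ then yields the planar CD identity $(\bar z-\tau z)\tilde K_N(z,\bar z)=(1-\tau^2)\partial_z\tilde K_N(z,\bar z)+\tfrac{1}{h^{\rm H}_{N-1}}\bigl[p^{\rm H}_{N-1}(z)\overline{p^{\rm H}_N(z)}-\tau p^{\rm H}_N(z)\overline{p^{\rm H}_{N-1}(z)}\bigr]$. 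Together with $\partial_z\omega^{\rm H}=-(\bar z-\tau z)(1-\tau^2)^{-1}\omega^{\rm H}$, this gives a closed-form expression for $\partial_z[\tilde K_N\omega^{\rm H}]$, and the elementary IBP $\int z^{p_1}f\,dA=-\tfrac{1}{p_1+1}\int z^{p_1+1}\partial_z f\,dA$ then reduces $M^{\rm H,\mathbb{C}}_{p_1,p_2,N}$ to a boundary integral of $z^{p_1+1}\bar z^{p_2}\bigl[p^{\rm H}_{N-1}\overline{p^{\rm H}_N}-\tau p^{\rm H}_N\overline{p^{\rm H}_{N-1}}\bigr]\omega^{\rm H}$. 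Expanding $z^{p_1+1}p^{\rm H}_j$ and $z^{p_2}p^{\rm H}_k$ in $\{p^{\rm H}_n\}$ via \eqref{eq: Tp}, together with orthogonality and $h^{\rm H}_n/h^{\rm H}_{N-1}=n!/(N-1)!$, yields \eqref{Thm 2.5_1}; the summation range over $n$ reflects the finite support of $(A^{p_1+1})^n_\cdot$ and $(A^{p_2})^n_\cdot$.

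For part (ii), I would run the analogous scheme using the planar symplectic pre-kernel built from skew-orthogonal polynomials~\cite{AEP22}, and the BE23 identity expressing the symplectic density (or its suitable differential expression) as the complex density at level $2N$ plus a boundary correction involving the highest-degree skew-orthogonal polynomials. The bulk part of the identity integrates immediately against $z^{p_1}\bar z^{p_2}$ to give $\tfrac12 M^{\rm H,\mathbb{C}}_{p_1,p_2,2N}$. The harmonic-mean factor $p_1p_2/(p_1+p_2)$ in the second term suggests the use of a symmetric integration by parts based on Euler's identity $(p_1+p_2)z^{p_1}\bar z^{p_2}=z\partial_z(z^{p_1}\bar z^{p_2})+\bar z\partial_{\bar z}(z^{p_1}\bar z^{p_2})$, distributing the IBP with weights $p_1/(p_1+p_2)$ on the $\partial_z$-part and $p_2/(p_1+p_2)$ on the $\partial_{\bar z}$-part. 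Cross-differentiation of the $(z-\bar z)$ prefactor of the symplectic density against $\omega^{\rm H}$ (which supplies the $(1-\tau^2)$ through $\partial\omega^{\rm H}/\omega^{\rm H}$) then produces the self-referential term $(1-\tau^2)\tfrac{p_1p_2}{p_1+p_2}M^{\rm H,\mathbb{H}}_{p_1-1,p_2-1,N}$, while the skew boundary correction, expanded via orthogonality just as in part (i), contributes the double sum over $(k,n)$ with the entries $(A^{p_1})^n_{2k}(A^{p_2})^n_{2N}$ and $(A^{p_1})^n_{2N}(A^{p_2})^n_{2k}$ at the highest skew-orthogonal levels. The main obstacle is precisely this part (ii): one must track both indices $(k,n)$ through the Pfaffian structure simultaneously, and verify that the $(z-\bar z)$ prefactor together with the symmetric IBP yield exactly the harmonic-mean coefficient and the shifted moment $M^{\rm H,\mathbb{H}}_{p_1-1,p_2-1,N}$ without residual cross terms; part (i) is essentially routine once the CD identity above is in hand.
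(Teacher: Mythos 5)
Your proposal follows essentially the same route as the paper: the LR16 derivative identity for the complex kernel and the BE23 identity for the symplectic pre-kernel, followed by integration by parts and expansion through the coefficients $(A^p)^j_k$ via orthogonality; your rederivation of the Christoffel--Darboux-type identity in part (i) is correct and reproduces exactly \cite[Proposition 2.3]{LR16}, so part (i) matches the paper's argument in full. For part (ii) your skeleton (BE23 identity plus the $p_1/(p_1+p_2)$, $p_2/(p_1+p_2)$ weighted combination of the $\partial_z$- and $\partial_{\bar z}$-IBPs) is also the paper's, but your attribution of the two key terms is swapped: in the paper the relation $(\partial_z+\tfrac{z}{1+\tau})\omega^{\rm H}=\tfrac{z-\bar z}{1-\tau^2}\omega^{\rm H}$ is what regenerates $M^{\rm H,\mathbb{H}}_{p_1,p_2,N}$ itself (the quantity being solved for, with a $1/(1-\tau^2)$ prefactor), while the term $(1-\tau^2)\tfrac{p_1p_2}{p_1+p_2}M^{\rm H,\mathbb{H}}_{p_1-1,p_2-1,N}$ comes from the derivatives $p_1z^{p_1-1}\bar z^{p_2}$ and $p_2z^{p_1}\bar z^{p_2-1}$ of the monomial, which only after the weighted combination reassemble the factor $(\bar z-z)z^{p_1-1}\bar z^{p_2-1}$ needed to identify a symplectic moment. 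With that bookkeeping corrected, your plan carries through to \eqref{Thm 2.5_2} with no residual cross terms.
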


As an immediate consequence, we have the following. 

\begin{ex}
We consider the two extremal cases; the Gaussian ensemble and the Ginibre ensemble.
\begin{itemize}

\item $(\tau=0)$. We have for the spectral moments for the GinUE and GinSE: 
\begin{align} \label{eq: ginue moment}
    & M^{\text{GinUE}}_{p_1,p_2,N} =  \frac{1}{p_1+1} \frac{(N+p_1)!}{(N-1)!} \mathbbm{1}_{ \{ p_1=p_2=p \} } 
    \\ 
    \label{eq: ginse moment}
    & M^{\text{GinSE}}_{p_1,p_2,N} = \begin{cases}
         \displaystyle   \sum_{k=0}^{N-1} \frac{(2k+1+p)!}{(2k+1)!} & \text{if } p_1=p_2=p,
         \smallskip 
         \\
       \displaystyle  -\frac{p_1}{2 } \sum_{k=0}^{N-1} \frac{(2k+1+p_2)!}{(2k+1)!} \frac{(2k)!!}{(2k+2-p_1+p_2)!!} & \text{if } p_1>p_2.
    \end{cases}
\end{align} 
(The moment in \eqref{eq: ginue moment} also arises in the study of random permutations, see e.g. \cite{Dub24}.)
Note that the particular case $(p_1,p_2)=(2p,0)$ of \eqref{eq: ginse moment} becomes
\begin{equation}
     M^{\text{GinSE}}_{2p,0,N} = -2^{p-1}p \sum_{k=0}^{N-1} \frac{k!}{(k+1-p)!} = -2^{p-1} \sum_{k=0}^{N-1} \Big\{ \frac{(k+1)!}{(k+1-p)!} - \frac{k!}{(k-p)!} \Big\} = -2^{p-1} \frac{N!}{(N-p)!}.
\end{equation}
This formula agrees with that given in \cite[Corollary 4.6]{Eb21}.

\smallskip

\item $(\tau\to1)$. We have for the spectral moments of the GUE and GSE:
\begin{equation} \label{eq: Gaussian moments}
\begin{split}
    & M^{\text{GUE}}_{2p,N} = M^{\rm{H} ,\mathbb{C}}_{2p,0,N} \Big\vert_{\tau=1} = \sum_{l=0}^p \frac{(2p)!}{2^l l! (p-l)!} \binom{N}{p-l+1},
    \\
    & M^{\text{GSE}}_{2p,N} = M^{\rm{H},\mathbb{H}}_{2p,0,N} \Big\vert_{\tau=1} = \frac{1}{ 2 } M^{\text{GUE}}_{2p,2N} - \frac{1}{ 2 } \sum_{r=1}^{p} \sum_{l=0}^{p} \frac{(2N)!!}{(2N-2r)!!} \frac{(2p)!}{2^l l! (p-l+r)!} \binom{2N-2r}{p-l-r}.
\end{split}
\end{equation}
This formula is equivalent to \cite[Theorem 2.9]{MS11}.
%In \cite[Theorem 2.9]{MS11}, the moments of GSE are presented as the sum of the moments of GUE and some correction term. We also derive this type of relation as \eqref{eq: Gaussian moments}, although the correction term is given in a different form.
\end{itemize}
\end{ex} 

As previously mentioned, the spectral moments of the GUE has been extensively studied in the literature. In particular, it is well known that it admits a large-$N$ expansion of the form \cite{BFO24} 
\begin{equation}\label{GUE genus expansion}
\frac{1}{N^{p+1} } M_{N,2p}^{ \rm GUE } =    \sum_{g=0}^{\lfloor (p+1)/2 \rfloor}  \frac{ \mathcal{E}_g(p) }{ N^{2g} } ,  \qquad  \mathcal{E}_g(p) :=    (2p-1)!! \sum_{m=0}^{2g}  \frac{ s(p+1-m,p+1-2g)}{(p+1-m)!}   \binom{p}{m} 2^{p-m}, 
\end{equation}
where $s(n,k)$ are the Stirling numbers of the first kind \cite[Section 26.8]{NIST}.
We remark that the coefficient $\mathcal{E}_g(p)$ enumerates the number of pairings of the edges of a $2p$-gon, which is dual to a map on a compact Riemann surface of genus $g$ (cf.~\cite{HZ86}). For this reason, the expansion \eqref{GUE genus expansion} is commonly referred to as the \emph{genus expansion}.

By Corollary~\ref{Cor_holo moments}, the genus expansion \eqref{GUE genus expansion} also yields a $1/N^2$ expansion for the holomorphic spectral moments of the complex elliptic Ginibre ensemble. In contrast, for other cases—such as mixed moments and symplectic ensembles—a separate asymptotic analysis is required.

As an application of Theorem~\ref{Thm_eGinibre 2}, we have the large-$N$ expansion of the spectral moments. 

\begin{thm}[\textbf{Genus type expansion of the elliptic Ginibre ensemble}] \label{Thm_eGinUSE genus expansion}
As $N\to \infty$, we have the following.
\begin{itemize}
    \item[\textup (i)] We have for the complex ensemble \begin{equation}
    \frac{1}{N^{\frac{p_1+p_2}{2}+1}} M^{ \rm{H} ,\mathbb{C}}_{p_1,p_2,N} = C_1(p_1,p_2) + C_2(p_1,p_2) \frac{1}{N} + O(N^{-2}),
\end{equation}
where $C_1(p_1,p_2)$ is given by \eqref{def of C1 p1 p2}, and 
\begin{equation} \label{def of C2 p1 p2}
\begin{split}
 C_2(p_1,p_2) := -\sum_{r\in\mathcal{I}_{p_1 \wedge p_2}} \tau^{\frac{p_1+p_2}{2}+r} \binom{p_1}{\frac{p_1+r}{2}} \binom{p_2}{\frac{p_2+r}{2}} \frac{r}{2}.
\end{split}
\end{equation}
     \item[\textup (ii)] We have for the symplectic ensemble
\begin{equation}
    \frac{1}{ 2^{ \frac{p_1+p_2}{2} } N^{\frac{p_1+p_2}{2}+1}} M^{ \rm{H} ,\mathbb{H}}_{p_1,p_2,N} = C_1(p_1,p_2) + C_2^{\prime}(p_1,p_2) \frac{1}{N} + O(N^{-2}),
\end{equation}
where
\begin{equation} \label{eq: C2prime}
\begin{split}
   &\quad C_2^{\prime}(p_1,p_2)  := \frac{1}{2} C_2(p_1,p_2) + \frac{1-\tau^2}{2} \frac{p_1p_2}{p_1+p_2} C_1(p_1-1,p_2-1)
    \\
    & \quad - \frac{1}{2} \sum_{r\in\mathcal{I}_{p_1 \vee p_2}} \sum_{s=1}^{p_1 \vee p_2} \tau^{\frac{p_1+p_2}{2}+r-s} \bigg[ \frac{p_1}{p_1+p_2} \binom{p_1}{\frac{p_1+r}{2}-s} \binom{p_2}{\frac{p_2+r}{2}} + \frac{p_2}{p_1+p_2} \binom{p_1}{\frac{p_1+r}{2}} \binom{p_2}{\frac{p_2+r}{2}-s} \bigg].
\end{split}
\end{equation} 
\end{itemize}
\end{thm}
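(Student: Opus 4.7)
The plan is to start from the closed-form identities in Theorem~\ref{Thm_eGinibre 2} and to perform a careful term-by-term asymptotic expansion in $1/N$. Both the factorial ratios and the explicit Hermite coefficients $(A^p)^j_k$ of Proposition~\ref{Prop_Ap jk explicit} admit elementary polynomial expansions in $N$ once the summation index $n$ is recentred near $N$ (respectively $2N$ in part (ii)); the surviving range of $n$ is, in any case, a bounded interval depending only on $p_1$ and $p_2$, so the expansion can be carried out inside a finite sum.

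For part (i), substitute $n = N-1+m$ in \eqref{Thm 2.5_1}, so that $m$ runs over a finite interval independent of $N$. The factorial ratio $\frac{(N-1+m)!}{(N-1)!}$ equals the Pochhammer symbol $(N)_m$ and expands as $N^m + \binom{m}{2} N^{m-1} + O(N^{m-2})$. Plugging in the explicit binomial expression for $(A^p)^j_k$ from Proposition~\ref{Prop_Ap jk explicit}, each summand becomes a product of an integer polynomial in $N$, a monomial in $\tau$, and binomial coefficients in $p_1,p_2$. Collecting the $N^{(p_1+p_2)/2+1}$ contributions recovers $C_1(p_1,p_2)$, consistent with Theorem~\ref{Thm_elliptic and MP}(i); collecting the $N^{(p_1+p_2)/2}$ contributions yields $C_2(p_1,p_2)$. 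The factor $-r/2$ in \eqref{def of C2 p1 p2} emerges transparently after symmetrising the bounded $m$-sum under $m \leftrightarrow -m$ and exploiting the antisymmetric combination $(A^{p_1+1})^n_{N-1}(A^{p_2})^n_N - \tau (A^{p_1+1})^n_N (A^{p_2})^n_{N-1}$ in \eqref{Thm 2.5_1}.

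For part (ii), apply the recursion \eqref{Thm 2.5_2}. The first summand $\frac{1}{2} M^{\rm H,\mathbb{C}}_{p_1,p_2,2N}$ is analysed directly by part (i) with $N \mapsto 2N$, and a short computation shows that it contributes $C_1(p_1,p_2)$ at leading order together with exactly $\frac{1}{2} C_2(p_1,p_2)$ to the $1/N$ coefficient. The middle summand $(1-\tau^2)\frac{p_1 p_2}{p_1+p_2} M^{\rm H,\mathbb{H}}_{p_1-1,p_2-1,N}$ is of order $N^{(p_1+p_2)/2}$, so only its leading term enters at order $1/N$, yielding $\frac{1-\tau^2}{2} \frac{p_1 p_2}{p_1+p_2} C_1(p_1-1,p_2-1)$ via Theorem~\ref{Thm_elliptic and MP}(ii). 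For the remaining double sum in \eqref{Thm 2.5_2}, observe that $(A^{p_1})^n_{2k} (A^{p_2})^n_{2N}$ and its symmetric partner force both $|n-2N|$ and $|n-2k|$ to be uniformly bounded, so setting $k = N - s$ and $n = 2N + t$ with $s,t$ bounded turns the sum into a finite one. Writing $\frac{(2N)!!}{(2k)!!} = 2^s \frac{N!}{(N-s)!}$ and expanding all factorial ratios as polynomials in $N$, the leading $1/N$ contribution reproduces the double sum appearing in \eqref{eq: C2prime}.

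The principal obstacle is purely combinatorial: in part (i), several subleading contributions of the same order---those coming from the Pochhammer expansion, from each of the two $(A^p)^j_k$ factors, and from the shift of $\tau$-powers---must be combined and collapsed into the single sum over $r \in \mathcal{I}_{p_1 \wedge p_2}$, which requires repeated use of Vandermonde's convolution. Part (ii) introduces only one further combinatorial piece, namely the double sum over $(k,n)$, but presents no additional conceptual difficulty beyond that of part (i), and the expansion closes without any further induction on $p_1 + p_2$.
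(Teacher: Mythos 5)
Your treatment of part (i) follows essentially the same route as the paper: substitute the explicit Hermite coefficients $(A^p)^j_k$ from Proposition~\ref{Prop_Ap jk explicit} into \eqref{Thm 2.5_1}, recentre the summation index near $N$, expand the Pochhammer symbols as polynomials in $N$, and collect the two leading orders. The paper organises this through the quantities $m^{l_1,l_2}_{p_1,p_2,N}$ and obtains $C_2$ as $\mathcal{C}_1(p_1,p_2,0,0)+\mathcal{C}_0(p_1,p_2,1,0)+\mathcal{C}_0(p_1,p_2,0,1)$ by a fairly long direct computation ending in a telescoping sum; your claim that the factor $-r/2$ ``emerges transparently'' from a symmetrisation $m\leftrightarrow -m$ is optimistic but not a logical defect, since you acknowledge the Vandermonde bookkeeping that is actually required.

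There is, however, a genuine gap in part (ii). To extract the $1/N$ coefficient from the middle term $(1-\tau^2)\tfrac{p_1p_2}{p_1+p_2}M^{\rm H,\mathbb{H}}_{p_1-1,p_2-1,N}$ of \eqref{Thm 2.5_2} you invoke the leading-order asymptotics of the \emph{symplectic} moments (you cite Theorem~\ref{Thm_elliptic and MP}(ii), which is the Wishart case; you mean the symplectic half of \eqref{elliptic leading order} in part (i)). But the paper explicitly defers the proof of that symplectic leading-order statement to the proof of the present theorem: it is not established independently anywhere, so using it here is circular. The paper resolves this by running the entire argument as an induction on $p_1+p_2$ — the base case $p_1p_2=0$ kills the middle term, and the induction hypothesis supplies the two-term expansion of $M^{\rm H,\mathbb{H}}_{p_1-1,p_2-1,N}$ with an $O(N^{-2})$ relative error, which is exactly what is needed for the claimed remainder at level $(p_1,p_2)$. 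Your explicit assertion that ``the expansion closes without any further induction on $p_1+p_2$'' is therefore wrong as stated: either you must install this induction, or you must independently derive the leading asymptotics of the symplectic moments directly from the exact formula \eqref{eq: main2} (in the spirit of what the paper does for the Laguerre case), neither of which your proposal does. The remaining ingredients of part (ii) — the factor $\tfrac12 C_2$ from $\tfrac12 M^{\rm H,\mathbb{C}}_{p_1,p_2,2N}$ and the reduction of the double sum via $k=N-s$, $n=2N+t$ with $\tfrac{(2N)!!}{(2k)!!}=2^s N!/(N-s)!$ — match the paper's computation.
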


We remark that, unlike the holomorphic moments—where $M^{\rm{H},\mathbb{C}}_{p_1,p_2,N}$ admits a $1/N^2$ expansion—the general mixed moments typically exhibit a $1/N$ expansion.

\begin{ex}
We consider the two extremal cases. 
\begin{itemize}
    \item ($\tau\to1$). In the Hermitian limit, we have
    \begin{equation}  \label{C2 prime tau1}
        C_2(p_1,p_2) \Big\vert_{\tau\to1} = 0, \qquad 
     C'_2(p_1,p_2) \Big\vert_{\tau\to1} =  -\frac{1}{2} \sum_{l=0}^{p-1} \binom{2p}{l},
    \end{equation}
    where $p=(p_1+p_2)/2$. See Appendix~\ref{Appendix_combinatorial identities} for a verification of $C_2$. 
    %This in particular show that 
    %\begin{equation} \label{eq: GSE expansion}
    %\frac{1}{N^{p+1}} M^{\text{GSE}}_{2p} = C_p - \frac{1}{N}  \frac{1}{2}\sum_{l=0}^{p-1} \binom{2p}{l} + O(N^{-2}), \qquad \frac{1}{N^{p+1}} M^{\text{GSE}}_{2p} = C_p - \frac{1}{N}  \frac{1}{2}\sum_{l=0}^{p-1} \binom{2p}{l} + O(N^{-2}).
    %\end{equation}
    This is consistent with the known fact that the spectral moments of the GUE admit a $1/N^2$ expansion, whereas those of the GSE admit a $1/N$ expansion, see, e.g. \cite{WF14}.
  Explicit formulas for the expansion of the GSE spectral moments with small values of $p$ are provided in~\cite[Theorem 6]{Le09} and~\cite[Eq.~(3.27)–(3.33)]{WF14}.
 %Expansion \eqref{eq: GSE expansion} yields the same result when evaluated for small $p$.
    \smallskip 
    \item ($\tau=0$). In this case, we have  
    \begin{equation}
        C_2(p_1,p_2) \Big\vert_{\tau=0} = \frac{p_1}{2} \mathbbm{1}_{\{ p_1=p_2 \} },    \qquad    C'_2(p_1,p_2) \Big\vert_{\tau=0} = \frac{1}{4}(p+1) \mathbbm{1}_{\{p_1=p_2=p\} } - \frac{1}{2} \frac{p_1\vee p_2}{p_1+p_2} \mathbbm{1}_{\{p_1\ne p_2\}}.
    \end{equation}
   This can also be directly checked using \eqref{eq: ginue moment} and \eqref{eq: ginse moment}.
\end{itemize}
\end{ex}

\begin{rem}
The differential operator approach used in Theorems~\ref{Thm_eGinibre 2} and~\ref{Thm_eGinUSE genus expansion} is based on \cite[Proposition 2.3]{LR16} for the complex elliptic ensemble and \cite[Proposition 1.1]{BE23} for the symplectic elliptic ensemble. In particular, by applying a suitable integration by parts argument, we derive Theorem~\ref{Thm_eGinibre 2}. A key advantage in this setting is that the associated differential operator is of first order, together with the simple exponential form of the weight function \eqref{def of weight H}, both of which facilitate the computation.
An analogous differential operator formula for the complex and symplectic non-Hermitian Wishart ensembles was obtained in the recent work \cite[Theorem 1.1]{BN24}. However, in contrast to the elliptic case, the non-Hermitian Wishart case involves a second-order differential operator. Combined with the more complicated form of the weight function \eqref{def of weight L}, which involves the modified Bessel function, this makes the method—particularly when applying integration by parts—significantly more difficult to implement. 
\end{rem}

\subsection*{Plan of the paper} The remainder of this paper is organised as follows. In Section~\ref{Section_Preliminaries}, we present some basic preliminaries, including fundamental properties and explicit formulas for classical orthogonal polynomials, as well as the integrable structure of Coulomb gas ensembles. 
Section~\ref{Section_Proofs} is devoted to the proofs of our main results, namely Theorems~\ref{thm: main}, \ref{Thm_elliptic and MP}, \ref{Thm_eGinibre 2}, and \ref{Thm_eGinUSE genus expansion}. 
Several appendices are included. In Appendix~\ref{Appendix_moment elliptic}, we derive the mixed moments of the elliptic law. In Appendix~\ref{Appendix_eGinibre spectral moments}, we provide an explicit formula for the spectral moments of elliptic Ginibre matrices as a consequence of Theorem~\ref{thm: main}. Finally, Appendix~\ref{Appendix_combinatorial identities} contains verifications of several elementary combinatorial identities.

\subsection*{Acknowledgements} We thank Peter J. Forrester for helpful discussions and for his comments on the draft of the manuscript.

\section{Preliminaries} \label{Section_Preliminaries} 

This section is devoted to compiling basic properties of classical orthogonal polynomials, together with several explicit formulas that will be used throughout the paper. We also recall key integrable features of the two-dimensional ensembles \eqref{Gibbs complex} and \eqref{Gibbs symplectic}, and describe the role of planar (skew)-orthogonal polynomials in their analysis.

We begin with a brief proof of Lemma~\ref{lem: scaling}.

\begin{proof}[Proof of Lemma~\ref{lem: scaling}] 
Combining \eqref{inverse representation} and \eqref{linearisation}, we have 
\begin{equation*}
x^p P_j(x) = \sum_{l=0}^p  a_{p,l}  P_l(x) P_j(x) = \sum_{l=0}^p a_{p,l}\sum_{m=0}^{l+j} b_{l,j,m} P_m(x). 
\end{equation*} 
Then the lemma follows from  
\begin{align*}
T_p \, p_k(z) &=  \alpha^{p+k} \  T_p P_k\Big( \frac{z}{\alpha}\Big)  = \alpha^{p+k} \sum_{l=0}^p a_{p,l}\sum_{m=0}^{l+k} b_{l,k,m} P_m\Big( \frac{z}{\alpha}\Big) =  \sum_{l=0}^p a_{p,l}\sum_{m=0}^{l+k} b_{l,k,m} \alpha^{p+k-m}  p_m(z)
\\
&= \sum_{ j=k-p }^{ k+p }  \sum_{ l= j-k 	\vee 0 }^p  a_{p,l} b_{l,k,j} \alpha^{p+k-j} p_j(z) . 
\end{align*} 
This completes the proof. 
\end{proof}

\subsection{Hermite, Laguerre and Gegenbauer polynomials} \label{Subsec_OP basics}

We recall the definitions of the Hermite, Laguerre, and Gegenbauer polynomials, and present their inversion and linearisation coefficients in Definition~\ref{Def_mixed and linearisation}.

The  Hermite, generalised Laguerre, and Gegenbauer polynomials are defined by \cite{NIST}
\begin{align}
 \label{def of Hermite}
H_k(x) & := (-1)^k e^{x^2} \frac{d^n}{dx^n} e^{-x^2},
\\
 \label{def of Laguerre}
L^\nu_k(x) &:= \frac{x^{-\nu}e^x}{k!} \frac{d^k}{dx^k} (x^{k+\nu} e^{-x}),
\\
 \label{def of Gegenbauer}
C_n^{(a)}(x) &:= \frac{(-1)^n}{2^n n!} \frac{\Gamma(a + \tfrac{1}{2}) \Gamma(n + 2a)}{\Gamma(2a) \Gamma(a + n + \tfrac{1}{2})}
(1 - x^2)^{-a + 1/2} \frac{d^n}{dx^n} \Big[ (1 - x^2)^{n + a - 1/2} \Big]. 
\end{align} 
They form families of orthogonal polynomials on the real line with respect to the weights \eqref{def of weight H Hermitian}, \eqref{def of weight L Hermitian}, and \eqref{def of weight G Hermitian}, respectively. As previously mentioned, they also define planar orthogonal polynomial systems as in \eqref{def of planar Hermite}, \eqref{def of planar Laguerre}, and \eqref{def of planar Gegenbauer}.

Their inversion and linearisation coefficients are given as follows. 
\begin{itemize}
    \item Inversion coefficients (cf. \cite[Theorem 5]{KS98}):
\begin{align}
\label{hermite inversion}
&x^{n}  = \frac{n!}{2^{n} } \sum_{l=0}^{ \lfloor n/2 \rfloor } \frac{1}{ l!( n-2l)! }H_{n-2l}(x) , 
\\ \label{laguerre inversion}
& x^n  = n! \sum_{l=0}^n (-1)^l \binom{n+\nu}{n-l} L_l^{\nu}(x), 
\\
\label{gegenbauer inversion}
& x^n  = \frac{n!}{2^n} \sum_{l=0}^{\lfloor n/2 \rfloor} \frac{n+a-2l}{l!\ (a)_{n+1-l}} C^{(a)}_{n-2l}(x).
\end{align} 
\item Linearisation coefficients (cf. \cite[Eqs.(3.18),(3.20)]{Ze92} and \cite[Eq.(26)]{Ch10}):  
\begin{align} \label{hermite linearisation} 
 H_n(x) H_m(x)&= \sum_{s=0}^{\lfloor (n+m)/2 \rfloor}  2^s\, s! \binom{n}{s} \binom{m}{s}    H_{n+m-2s}(x),
 \\
 \label{laguerre linearisation}
L_n^{\nu}(x) L_m^{\nu}(x) & = \sum_{k=0}^{n+m} \bigg( \sum_{ s \ge 0 } \frac{   (-2)^{k+n+m-2s} k! (\nu+s)! }{ (k+\nu)! (s-k)!(s-n)!(s-m)! (k+n+m-2s)! } \bigg) L_k^{\nu}(x), 
\\
\label{gegenbauer linearisation}
 C^{(a)}_n(x) C^{(a)}_m(x)  & = \sum_{l=0}^{n\wedge m} \frac{n+m+a-2l}{n+m+a-l} \frac{(a)_l (a)_{n-l} (a)_{m-l} (2a)_{n+m-l} (n+m-2l)!}{l! (n-l)! (m-l)! (a)_{n+m-l} (2a)_{n+m-2l}} C^{(a)}_{n+m-2l}(x). 
\end{align}
\end{itemize}
Note that the summation over $s\ge0$ in \eqref{laguerre linearisation} becomes finite if we adopt the convention that the reciprocal factorial \( \frac{1}{n!} \) is interpreted as \( \frac{1}{\Gamma(n+1)} = 0 \) for negative integers \( n \).

The classical three-term recurrence relations for the Hermite, Laguerre, and Gegenbauer polynomials imply that the corresponding planar orthogonal polynomials \eqref{def of planar Hermite}, \eqref{def of planar Laguerre}, and \eqref{def of planar Gegenbauer} satisfy
\begin{align}
 \label{hermite recurrence}
    z \, p_{k}^{\rm H}(z) & = p_{k+1}^{\rm H}(z) + k\tau \, p_{k-1}^{\rm H}(z),
    \\
    \label{laguerre recurrence}
    z \, p_k^{\rm L}(z) & = p_{k+1}^{\rm L}(z) + \tau(2k+1+\nu) \, p_{k}^{\rm L}(z) + \tau^2 k(k+\nu) \, p_{k-1}^{\rm L}(z), 
    \\
    \label{gegenbauer recurrence}
    z \, p_k^{\rm G}(z) & = p_{k+1}^{\rm G}(z) + \frac{\tau}{4} \frac{k(k+1+2a)}{(k+a)(k+1+a)} \, p_{k-1}^{\rm G}(z).
\end{align}   
These relations define the coefficients $b_k$ and $c_k$ in \eqref{eq: recurrence}.
Recall that the coefficients $(A^p)_k^j$ are defined by \eqref{eq: Tp}. Below, we provide their explicit formulas for the planar Hermite, Laguerre, and Gegenbauer polynomials.

%\begin{rem}
%For the planar symplectic Wishart ensemble, we replace $\nu$ by $2\nu$.
%\end{rem}

\begin{prop} \label{Prop_Ap jk explicit}
Recall that $\mathcal{I}_p$ is given by \eqref{I_p def}. %We have the following.
\begin{itemize}
    \item For the planar Hermite polynomial \eqref{def of planar Hermite}, we have  
    \begin{equation} \label{Ap Hermite}
     (A^p)^j_k =  \tau^{\frac{p+k-j}{2}} \sum_{l=0}^{\lfloor p/2 \rfloor} \frac{p!}{2^l \, l!(\frac{p-k+j}{2}-l)!} \binom{k}{\frac{p+k-j}{2}-l}
    \end{equation}
    if $ j-k\in\mathcal{I}_{p},$ and $  (A^p)^j_k =0$ otherwise.  
    \smallskip 
     \item For the planar Laguerre polynomial \eqref{def of planar Laguerre}, we have  
     \begin{align} 
    \begin{split}  \label{Ap laguerre}
     (A^p)^j_k &=  \tau^{p+k-j} \sum_{l=0}^p \frac{p!}{(p-l)!} \frac{\Gamma(p+\nu+1)}{\Gamma(l+\nu+1)} \frac{k!}{\Gamma(j+\nu+1)} 
 \sum_{s\ge0} \frac{\Gamma(s+\nu+1)\,2^{j+k+l-2s}}{(s-j)!(s-k)!(s-l)!(j+k+l-2s)!} 
    \end{split}
     \end{align}
     if $\abs{j-k}\le p$, and $  (A^p)^j_k =0$ otherwise.  
    \smallskip 
     \item  For the planar Gegenbauer polynomial \eqref{def of planar Gegenbauer}, we have
     \begin{align}
    \begin{split}
     (A^p)^j_k & =   \tau^{\frac{p+k-j}{2}} \sum_{l=0}^{\lfloor p/2 \rfloor} \frac{k!\ 2^j (1+a)_j}{2^k (1+a)_k} \frac{p!}{2^p} \frac{p-2l+a+1}{l!\ (a+1)_{p+1-l}} \frac{j+a+1}{(p-2l+k+j)/2+a+1} 
        \\
        & \qquad \times \frac{  (a+1)_{\frac{k+p-2l-j}{2}} (a+1)_{\frac{j+p-2l-k}{2}} (a+1)_{\frac{j+k-p+2l}{2}} (2a+2)_{\frac{j+k+p-2l}{2}} } { \big(\frac{k+p-2l-j}{2}\big)! \big(\frac{j+p-2l-k}{2}\big)! \big(\frac{j+k-p+2l}{2}\big)! (a+1)_{\frac{j+k+p-2l}{2}} (2a+2)_{j} }
    \end{split}
     \end{align}
     if $j-k\in\mathcal{I}_{p},$ and $  (A^p)^j_k =0$ otherwise.  
\end{itemize}
\end{prop}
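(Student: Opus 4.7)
The plan is to derive each of the three formulas as a direct specialisation of Lemma~\ref{lem: scaling}, substituting the explicit inversion coefficients \eqref{hermite inversion}--\eqref{gegenbauer inversion} and the linearisation coefficients \eqref{hermite linearisation}--\eqref{gegenbauer linearisation} of the underlying classical families. Since the planar polynomials \eqref{def of planar Hermite}--\eqref{def of planar Gegenbauer} are given in the slightly more general form $p_k(z)=c_k P_k(z/\beta)$ with $P_k$ not required to be monic, my first step would be to repeat the very short computation in the proof of Lemma~\ref{lem: scaling} almost verbatim under this scaling, obtaining
\begin{equation*}
(A^p)_k^j \;=\; \frac{c_k}{c_j}\,\beta^p \sum_{l} a_{p,l}\,b_{l,k,j},
\end{equation*}
with $a_{p,l}$ and $b_{l,k,j}$ now denoting the inversion and linearisation coefficients of $P_k$ itself.

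For the Hermite case I would take $(c_k,\beta)=((\tau/2)^{k/2},\sqrt{2\tau})$, substitute \eqref{hermite inversion} and \eqref{hermite linearisation}, and reindex by $l=p-2l'$. The powers of $2$ coming from $c_k/c_j$, $\beta^p$, $a_{p,l}$ and $b_{l,k,j}$ collapse to $2^{-l'}$, leaving the pure $\tau$-power $\tau^{(p+k-j)/2}$, and the elementary identity
\begin{equation*}
\bigl((p+k-j)/2 - l'\bigr)!\,\binom{p-2l'}{(p+k-j)/2 - l'} \;=\; \frac{(p-2l')!}{((p-k+j)/2 - l')!}
\end{equation*}
reduces what remains to \eqref{Ap Hermite}. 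For the Laguerre case I would take $(c_k,\beta)=((-1)^k k!\,\tau^k,\tau)$; the three sign contributions $(-1)^{k-j}$ from $c_k/c_j$, $(-1)^l$ from \eqref{laguerre inversion}, and $(-2)^{j+k+l-2s}$ from \eqref{laguerre linearisation} combine into the purely positive $2^{j+k+l-2s}$ displayed in \eqref{Ap laguerre}, while the surviving factorials and $\Gamma$-values rearrange termwise. For the Gegenbauer case I would take $(c_k,\beta)=\bigl(\tfrac{k!}{(1+a)_k}(\sqrt{\tau}/2)^k,\sqrt{\tau}\bigr)$ and proceed analogously with \eqref{gegenbauer inversion} and \eqref{gegenbauer linearisation}.

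The support conditions---$j-k\in\mathcal{I}_p$ for Hermite and Gegenbauer and $|j-k|\le p$ for Laguerre---are inherited automatically from the joint support of $a_{p,l}$ and $b_{l,k,j}$: inversion forces $0\le l\le p$ with the appropriate parity of $l$ in the symmetric cases, while linearisation requires $|l-k|\le j\le l+k$. The main, and essentially only, obstacle is combinatorial bookkeeping in the Gegenbauer case, where the Pochhammer symbols $(a+1)_\bullet$ and $(2a+2)_\bullet$ must be carried through a double sum and reindexed so that the outer summation range matches $0\le l\le\lfloor p/2\rfloor$. No ingredient beyond Lemma~\ref{lem: scaling} and the classical identities \eqref{hermite inversion}--\eqref{gegenbauer linearisation} is needed.
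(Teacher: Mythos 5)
Your proposal is correct and follows essentially the same route the paper intends: the proposition is a direct specialisation of Lemma~\ref{lem: scaling} using the listed inversion coefficients \eqref{hermite inversion}--\eqref{gegenbauer inversion} and linearisation coefficients \eqref{hermite linearisation}--\eqref{gegenbauer linearisation}, with the only real work being the bookkeeping of the normalisation constants $c_k$ and scaling $\beta$ for the non-monic classical families. Your generalised formula $(A^p)^j_k=\frac{c_k}{c_j}\beta^p\sum_l a_{p,l}b_{l,k,j}$ and the Hermite-case simplification (including the identity collapsing the powers of $2$ to $2^{-l}$) check out, and the sign cancellation in the Laguerre case works exactly as you describe.
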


\subsection{Integrable structure of complex and symplectic ensembles}

In this subsection, we recall the integrable structure of \eqref{Gibbs complex} and \eqref{Gibbs symplectic}. 

The $k$-point correlation functions of the ensembles \eqref{Gibbs complex} and \eqref{Gibbs symplectic} are defined by 
\begin{equation}
\begin{split} \label{def of RNk}
& R_{N,k}^{ \mathbb{C} }(z) = \frac{N!}{(N-k)!} \frac{1}{ Z_N^{ \mathbb{C} } } \int_{ \C^{ N-k } } \P_N^{ \mathbb{C} }( z_1,z_2,\dots,z_N ) \,dA(z_{k+1}) \dots dA(z_N), 
\\
& R_{N,k}^{ \mathbb{H} }(z) = \frac{N!}{(N-k)!} \frac{1}{ Z_N^{ \mathbb{H} } } \int_{ \C^{ N-k } } \P_N^{ \mathbb{H} }( z_1,z_2,\dots,z_N ) \,dA(z_{k+1}) \dots dA(z_N). 
\end{split}
\end{equation}
Then by definition, the spectral moments $M^{ \mathbb{C} }_{p_1,p_2,N}$ and $M^{ \mathbb{H} }_{p_1,p_2,N}$ can be written in terms of the $1$-point function:
\begin{equation} \label{moments r1}
M^{ \mathbb{C} }_{p_1,p_2,N} = \int_\C z^{p_1} \overline{z}^{p_2} R_{N,1}^{ \mathbb{C} }(z)\,dA(z), \qquad M^{ \mathbb{H} }_{p_1,p_2,N} = \int_\C z^{p_1} \overline{z}^{p_2} R_{N,1}^{ \mathbb{H} }(z)\,dA(z). 
\end{equation}

It is well known that the ensembles \eqref{Gibbs complex} and \eqref{Gibbs symplectic} form determinantal and Pfaffian point processes, respectively, whose correlation kernels are expressed in terms of the associated planar orthogonal and skew-orthogonal polynomials.
Recall the inner product $\langle \cdot , \cdot \rangle $ is given by \eqref{inner product} and $(p_k)_{k=1}^\infty$ is the associated planar orthogonal polynomials \eqref{OP norm}. 
In addition, we define a skew-symmetric form on $\mathbb{R}[z]$: 
\begin{equation}
\label{skew product}
\langle f, g\rangle_s : =\int_{\mathbb{C}} \Big( f(z)\overline{g(z)}-g(z)\overline{f(z)} \Big)(\overline{z}-z) \, \omega(z)\,dA(z). 
\end{equation}
A family of polynomials $(q_k)_{k=1}^\infty$ is called \textit{planar skew-orthogonal polynomials} if it satisfies  
 \begin{equation} \label{skew orthogonality}
    \langle q_{2j},q_{2k} \rangle_s=\langle q_{2j+1},q_{2k+1} \rangle_s=0,\qquad 
    \langle q_{2j+1},q_{2k} \rangle_s=-\langle q_{2j},q_{2k+1} \rangle_s=r_k \, \delta_{j,k},
\end{equation}
where $r_k$ is their skew-norm.
Using $(p_k)_{k=1}^\infty$ and $(q_k)_{k=1}^\infty$, we define
\begin{align}
 \widehat{K}^\mathbb{C}_N(z,w) := \sum_{k=0}^{N-1} \frac{1}{h_k} p_k(z) p_k(w), \qquad
    & \widehat{K}^\mathbb{H}_N(z,w) := \sum_{k=0}^{N-1} \frac{1}{r_k} \Big( q_{2k+1}(z) q_{2k}(w) - q_{2k}(z) q_{2k+1}(w) \Big),
\end{align}
and 
\begin{equation} \label{def of kernel}
\begin{split}
K^\mathbb{C}_N(z,w) := \sqrt{\omega(z)\omega(w)} \, \widehat{K}^\mathbb{C}_N(z,w), \qquad K_N^\mathbb{H}(z,w) := \sqrt{\omega(z)\omega(w)} \, \widehat{K}_N^\mathbb{H}(z,w). 
\end{split}
\end{equation}
Then, it is well known that (see e.g. \cite{BF25}) the $k$-point correlation functions \eqref{def of RNk} can be written as 
\begin{equation} \label{k-point}
\begin{split}
R_{N,k}^{ \mathbb{C} }(z) = \det \Big[ K^\mathbb{C}_N(z_j,z_l) \Big]_{j,l=1}^k, 
    \qquad 
     R_{N,k}^{ \mathbb{H} }(z) = \prod_{j=1}^k (\overline{z}_j-z_j) \, \Pf \begin{bmatrix}
        K^\mathbb{H}_N(z_j,z_l) & K^\mathbb{H}_N(z_j,\overline{z}_l) 
        \smallskip 
        \\
        K^\mathbb{H}_N(\overline{z}_j,z_l) & K^\mathbb{H}_N(\overline{z}_j,\overline{z_l})
    \end{bmatrix}_{j,l=1}^k.
\end{split}
\end{equation}
Note in particular that the $1$-point functions can be written as  
\begin{equation} \label{1-point}
    \begin{split}
    & R_{N,1}^{ \mathbb{C} }(z) = \omega(z) \sum_{k=0}^{N-1} \frac{1}{h_k} p_k(z) p_k(\overline{z}),
    \\
    & R_{N,1}^{ \mathbb{H} }(z) = \omega(z) (\overline{z}-z) \sum_{k=0}^{N-1} \frac{1}{r_k} \Big( q_{2k+1}(z) q_{2k}(\overline{z}) - q_{2k}(z) q_{2k+1}(\overline{z}) \Big).
\end{split}
\end{equation}
These formulas serve as the fundamental building blocks for establishing our main theorems.

\section{Proofs} \label{Section_Proofs}

In this section, we prove our main results, Theorems~\ref{thm: main}, \ref{Thm_elliptic and MP}, \ref{Thm_eGinibre 2}, and \ref{Thm_eGinUSE genus expansion}. 

\subsection{Proof of Theorem~\ref{thm: main}}
 
By using \eqref{moments r1} and \eqref{1-point}, we have 
\begin{equation}
\begin{split} \label{M p1p2 N in terms of OP}
    & M^\mathbb{C}_{p_1,p_2,N} = \int_\mathbb{C} z^{p_1} \overline{z}^{p_2} \sum_{k=0}^{N-1} \frac{1}{h_k} p_k(z) p_k(\overline{z}) \,\omega(z) \, dA(z),
    \\
    & M^\mathbb{H}_{p_1,p_2,N} = \int_\mathbb{C} z^{p_1} \overline{z}^{p_2} \sum_{k=0}^{N-1} \frac{1}{r_k} (\overline{z}-z) \Big( q_{2k+1}(z) q_{2k}(\overline{z}) - q_{2k}(z) q_{2k+1}(\overline{z}) \Big) \, \omega(z) \, dA(z).
\end{split}
\end{equation}

We begin by considering the moments of complex ensembles and proving \eqref{eq: main1}. The argument is straightforward: applying the orthogonality relation \eqref{OP norm} together with the definition \eqref{eq: Tp}, we obtain
\begin{equation*}
\begin{split}
    M^\mathbb{C}_{p_1,p_2,N} & = \sum_{k=0}^{N-1} \frac{1}{h_k} \Big\langle \sum_{j_1=k-p_1}^{k+p_1} (A^{p_1})^{j_1}_k p_{j_1}, \sum_{j_2=k-p_2}^{k+p_2} (A^{p_2})^{j_2}_k p_{j_2} \Big\rangle
    = \sum_{k=0}^{N-1} \sum_{n=k- p_1 \wedge p_2}^{k+ p_1 \wedge p_2} \frac{h_n}{h_k} (A^{p_1})^n_k (A^{p_2})^n_k.
\end{split}
\end{equation*}
This leads to the desired identity \eqref{eq: main1}. The particular case $p_2=0$ follows directly.

Next, we show \eqref{eq: main2}. For this purpose, we first consider the representation of the linear map $T_p$ in terms of the skew-orthogonal basis $(q_k)_{k=1}^\infty$ as in \eqref{eq: Tp}: 
\begin{equation}
    T_p q_{k}(z) = \sum_{j=0}^{k+p} (B^p)^j_k q_j(z).
\end{equation}
We claim that the coefficients $(B^p)^j_k$ defined this way satisfy the formula given in \eqref{B in A}. Given the three-term recurrence relation \eqref{eq: recurrence} for the planar orthogonal polynomials, it was shown in \cite[Theorem 3.1]{AEP22} that
\begin{equation} \label{sop construction}
    q_{2k+1}(z) := p_{2k+1}(z), \qquad q_{2k}(z) := \sum_{j=0}^k \mu_{k,j} \ p_{2j}(z),
\end{equation}
form a family of skew orthogonal polynomials, where \( \mu_{k,j} \) is defined in \eqref{mu in sop}, and the skew norm is given by \eqref{skew norm}. The inverse transformation of \eqref{sop construction} takes the form 
\begin{equation} \label{sop inversion}
    p_{2k+1}(z) = q_{2k+1}(z), \qquad p_{2k}(z) = q_{2k}(z) - \lambda_{k-1} q_{2k-2}(z),
\end{equation}
where $q_{-2}(z)\equiv q_{-1}(z)\equiv0$, and $\lambda_k$ is given in \eqref{mu in sop}. 
Using \eqref{sop construction} and \eqref{sop inversion} as change-of-basis relations, we obtain \eqref{B in A}.

Then, it follows from \eqref{M p1p2 N in terms of OP} that 
\begin{align}  \label{eq: MH in B}
\begin{split}
 &\quad  M^\mathbb{H}_{p_1,p_2,N}
    = \sum_{k=0}^{N-1} \frac{1}{r_k} \bigg[ \Big\langle \sum_{j_1=0}^{2k+1+p_1} (B^{p_1})^{j_1}_{2k+1} q_{j_1}, \sum_{j_2=0}^{2k+p_2} (B^{p_2})^{j_2}_{2k} q_{j_2} \Big\rangle_s  - \Big\langle \sum_{j_1=0}^{2k+p_1} (B^{p_1})^{j_1}_{2k} q_{j_1}, \sum_{j_2=0}^{2k+1+p_2} (B^{p_2})^{j_2}_{2k+1} q_{j_2} \Big\rangle_s \bigg]
    \\
    & = \frac{1}{2} \sum_{k=0}^{N-1} \sum_{n=0}^{k+\frac{p_1 \vee p_2+1}{2}} \frac{r_n}{r_k} \Big( (B^{p_1})^{2n+1}_{2k+1} (B^{p_2})^{2n}_{2k} - (B^{p_1})^{2n}_{2k+1} (B^{p_2})^{2n+1}_{2k} -  (B^{p_1})^{2n+1}_{2k} (B^{p_2})^{2n}_{2k+1} +_ (B^{p_1})^{2n}_{2k} (B^{p_2})^{2n+1}_{2k+1} \Big). 
\end{split}
\end{align} 
Here, we have used 
\begin{equation*}
    \int_\mathbb{C} (\overline{z}-z) f(z) g(\overline{z}) \, \omega(z) \, dA(z) = \frac{1}{2} \langle f,g \rangle_s
\end{equation*}
which holds for polynomials with real coefficients since $\omega(z) \, dA(z)$ have real moments. Moreover, since \( (A^p)^j_k \) is non-zero only when \( |j - k| \le p \), the range of the index \( n \) in \eqref{eq: MH in B} can be restricted as in \eqref{def of mathfrak m jk}.

% It remains to show that the coefficients \( (B^p)^j_k \) are given by \eqref{B in A}. Given the three-term recurrence relation \eqref{eq: recurrence} for the planar orthogonal polynomials, it was shown in \cite[Theorem 3.1]{AEP22} that
% \begin{equation} \label{sop construction}
%     q_{2k+1}(z) := p_{2k+1}(z), \qquad q_{2k}(z) := \sum_{j=0}^k \mu_{k,j} \ p_{2j}(z),
% \end{equation}
% form a family of skew orthogonal polynomials, where \( \mu_{k,j} \) is defined in \eqref{mu in sop}, and the skew norm is given by \eqref{skew norm}. The inverse transformation of \eqref{sop construction} takes the form 
% \begin{equation} \label{sop inversion}
%     p_{2k+1}(z) = q_{2k+1}(z), \qquad p_{2k}(z) = q_{2k}(z) - \lambda_{k-1} q_{2k-2}(z),
% \end{equation}
% where $q_{-2}(z)\equiv q_{-1}(z)\equiv0$, and $\lambda_k$ is given in \eqref{mu in sop}. 
% Using \eqref{sop construction} and \eqref{sop inversion} as change-of-basis relations, we obtain \eqref{B in A}. Moreover, since \( (A^p)^j_k \) is non-zero only when \( |j - k| \le p \), the range of the index \( n \) in \eqref{eq: MH in B} can be restricted as in \eqref{def of mathfrak m jk}.

Finally, we show \eqref{eq: holomorpic moments}. 
In order to consider the case $(p_1,p_2)=(p,0)$, notice that $(B_0)^j_k=\delta_{j,k}$ as in \eqref{Ap trivial}. Then we have 
\begin{equation*}
\begin{split}
    M^{\mathbb{H}}_{p,0,N} & = \frac{1}{2} \sum_{k=0}^{N-1} (B^p)^{2k+1}_{2k+1}+(B^p)^{2k}_{2k}
   = \frac{1}{2} \sum_{k=0}^{N-1} \Big( (A^p)^{2k+1}_{2k+1}+ \sum_{j=0}^{k}\mu_{k,j}(A^p)^{2k}_{2j} - \lambda_k\sum_{j=0}^{k}\mu_{k,j}(A^p)^{2k+2}_{2j} \Big)
    \\
    & = \frac{1}{2} \sum_{k=0}^{2N-1} (A^p)^k_k + \frac{1}{2}\sum_{k=0}^{N-1} \Big( \sum_{j=0}^{k-1}\mu_{k,j}(A^p)^{2k}_{2j} - \sum_{j=0}^{k}\mu_{k+1,j}(A^p)^{2k+2}_{2j} \Big).
\end{split}
\end{equation*}
The first term in the final expression matches the form of \eqref{holomorphic moments}, while the second term can be simplified as  
\begin{equation*}
\begin{split}
     \sum_{k=0}^{N-1} \Big( \sum_{j=0}^{k-1}\mu_{k,j}(A^p)^{2k}_{2j} - \sum_{j=0}^{k}\mu_{k+1,j}(A^p)^{2k+2}_{2j} \Big)
    &=  \Big(\sum_{k=1}^{N-1} \sum_{j=0}^{k-1} \mu_{k,j}(A^p)^{2k}_{2j} - \sum_{k=1}^{N} \sum_{j=0}^{k-1} \mu_{k,j}(A^p)^{2k}_{2j}\Big)
     = - \sum_{j=0}^{N-1}\mu_{N,j}(A^p)^{2N}_{2j}.
\end{split}
\end{equation*}
Hence we obtain \eqref{eq: holomorpic moments}. This completes the proof. \qed

\subsection{Proof of Theorem~\ref{Thm_elliptic and MP}}
First, we observe the asymptotic behaviour of $(A^p)^j_k$ in Proposition~\ref{Prop_Ap jk explicit}.
\begin{lem} \label{lem: asymptotics}
Suppose $|r| \le p$ is a fixed integer.
\begin{itemize} 
    \item For the planar Hermite polynomial \eqref{def of planar Hermite}, we have for large $k$,
    \begin{equation} \label{Ap asymptotics}
        (A^p)^{k-r}_k = \tau^{\frac{p+r}{2}} \binom{p}{\frac{p+r}{2}} k^{\frac{p+r}{2}} + O(k^{\frac{p+r}{2}-1}).
    \end{equation}
    \item  For the planar Laguerre polynomial \eqref{def of planar Laguerre}, we have for large $k$ and $\nu$,
    \begin{equation} \label{Ap asymp laguerre}
        (A^p)^{k-r}_{k} = \sum_{l=0}^p \sum_{s\ge0} \frac{p!}{(p-l)!} \frac{2^{l-r-2s}}{s!(s+r)!(l-r-2s)!} \nu^{p-l} k^{l-s} (k+\nu)^{s+r} + O\big( (k+\nu)^{p+r-1} \big).
    \end{equation}
\end{itemize}
\end{lem}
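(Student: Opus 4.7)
The plan is to derive both asymptotic expansions directly from the closed-form expressions for $(A^p)^j_k$ in Proposition~\ref{Prop_Ap jk explicit}, by substituting $j=k-r$ and extracting the dominant contributions as $k \to \infty$ (and, for the Laguerre case, also $\nu \to \infty$).

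For the Hermite case, substituting $j = k-r$ into \eqref{Ap Hermite} yields
\[
(A^p)^{k-r}_k = \tau^{(p+r)/2} \sum_{l=0}^{\lfloor p/2 \rfloor} \frac{p!}{2^l\, l!\, ((p-r)/2 - l)!}\, \binom{k}{(p+r)/2 - l},
\]
which is nonzero exactly when $r \in \mathcal{I}_p$. Using $\binom{k}{m} = k^m/m! + O(k^{m-1})$ for each fixed $m$, the dominant contribution comes from $l = 0$, producing the leading coefficient $\tau^{(p+r)/2} \binom{p}{(p+r)/2}$ after collecting factorials; all remaining summands, as well as the subleading parts of $\binom{k}{(p+r)/2}$, are of order $k^{(p+r)/2 - 1}$. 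This establishes \eqref{Ap asymptotics}.

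For the Laguerre case, I substitute $j = k-r$ into \eqref{Ap laguerre} and change the inner summation variable via $s \mapsto k + s$. Under the vanishing convention $1/n! = 0$ for negative integer $n$ inherited from \eqref{laguerre linearisation}, the factorial constraints built into the formula restrict $s$ to the bounded range $\max(0,-r) \le s \le \lfloor(l-r)/2\rfloor$, independent of $k$ and $\nu$. Each summand then factorizes into three ratios with the standard asymptotics
\[
\frac{\Gamma(p+\nu+1)}{\Gamma(l+\nu+1)} \sim \nu^{p-l}, \qquad \frac{\Gamma(k+s+\nu+1)}{\Gamma(k-r+\nu+1)} \sim (k+\nu)^{s+r}, \qquad \frac{k!}{(k+s-l)!} \sim k^{l-s},
\]
each obtained by writing the ratio as a product of linearly many near-equal factors and expanding $(1 + O(1/(k+\nu)))$ or $(1 + O(1/k))$. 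Multiplying these against the surviving combinatorial coefficients reproduces the leading form in \eqref{Ap asymp laguerre}. Since every leading-order summand has total exponent $(p-l) + (s+r) + (l-s) = p+r$, the combined subleading corrections produce the claimed $O((k+\nu)^{p+r-1})$ remainder.

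The principal obstacle is in the Laguerre step: one must verify that the re-indexed sum captures every nonzero contribution (that is, that the allowed range of $s$ has been correctly extracted from the vanishing-factorial conventions), and then confirm that multiplying three independently expanded Gamma-ratios preserves a uniform relative error, since each factor may individually grow as a large power of $k$, $\nu$, or $k+\nu$. Once uniformity of the $O(1/(k+\nu))$ relative errors has been verified on the bounded range of $s$ and $l$, the remainder of the argument is routine combinatorial bookkeeping.
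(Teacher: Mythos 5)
Your proposal is correct and follows essentially the same route as the paper: the Hermite case is read off directly from \eqref{Ap Hermite} with $j=k-r$ (the $l=0$ term dominating), and the Laguerre case uses exactly the shift $s\mapsto s+k$ in \eqref{Ap laguerre} followed by the standard Gamma-ratio asymptotics $\Gamma(x+a)/\Gamma(x+b)\sim x^{a-b}$ on the bounded range of $l$ and $s$. Your additional care about the admissible range of $s$ under the vanishing-factorial convention and the uniformity of the relative errors is a sound elaboration of details the paper leaves implicit.
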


\begin{proof}
The first identity in \eqref{Ap asymptotics} follows directly from \eqref{Ap Hermite}. For the second identity \eqref{Ap asymp laguerre}, by applying \( s \mapsto s + k \) in \eqref{Ap laguerre}, we obtain
\begin{equation*}
\begin{split}
    (A^p)^{k-r}_k & = \sum_{l=0}^p \frac{p!}{(p-l)!} \frac{\Gamma(p+\nu+1)}{\Gamma(l+\nu+1)} \sum_{s\ge0} \frac{k!}{(k+s-l)!} \frac{\Gamma(k+s+\nu+1)}{\Gamma(k-r+\nu+1)} \frac{2^{l-r-2s}}{s!(s+r)!(l-r-2s)!}
    \\
    & = \sum_{l=0}^p \frac{p!}{(p-l)!} \nu^{p-l} \sum_{s\ge0} k^{l-s} (k+\nu)^{s+r} \frac{2^{l-r-2s}}{s!(s+r)!(l-r-2s)!} + O( (k+\nu)^{p+r-1} ),
\end{split}
\end{equation*}
which gives \eqref{Ap asymp laguerre}.
\end{proof}

\begin{proof}[Proof of Theorem~\ref{Thm_elliptic and MP} (i)]
We first show \eqref{elliptic leading order} for the complex case. It follows from Theorem \ref{thm: main} and \eqref{Ap Hermite} that 
\begin{equation}
    M^{ \rm H,\mathbb{C}}_{p_1,p_2,N} = \sum_{k=0}^{N-1} \sum_{ r \in \mathcal{I}_{p_1 \wedge p_2} } \frac{(k-r)!}{k!} (A^{p_1})^{k-r}_k (A^{p_2})^{k-r}_k
\end{equation}
with $\mathcal{I}_p$ as in \eqref{I_p def}.
Applying \eqref{Ap asymptotics}, we have
\begin{equation}
\begin{split}
    M^{ \rm H,\mathbb{C}}_{p_1,p_2,N} & = \sum_{k=0}^{N-1} k^{ \frac{p_1+p_2}{2} } \bigg[ \sum_{ r \in \mathcal{I}_{p_1 \wedge p_2} } \tau^{ \frac{p_1+p_2}{2} + r } \binom{p_1}{\frac{p_1+r}{2}} \binom{p_2}{\frac{p_2+r}{2}} \bigg] + O( k^{ \frac{p_1+p_2}{2} - 1 } )
    \\
    & = \frac{1}{ \frac{p_1+p_2}{2}+1 } N^{ \frac{p_1+p_2}{2} + 1 } \bigg[ \sum_{ r \in \mathcal{I}_{p_1 \wedge p_2} } \tau^{ \frac{p_1+p_2}{2} + r } \binom{p_1}{\frac{p_1+r}{2}} \binom{p_2}{\frac{p_2+r}{2}} \bigg] + O( N^{ \frac{p_1+p_2}{2} } ), 
\end{split}
\end{equation}
where the last equality follows from the fact that $\sum_{k=1}^N k^p = \frac{1}{p+1} N^{p+1} + O(N^p).$ This gives the desired asymptotic behaviour \eqref{elliptic leading order} for $M_{p_1,p_2,N}^{ { \rm H }, \mathbb{C} }$. 

The assertion for the symplectic case \( M_{p_1,p_2,N}^{ {\rm H}, \mathbb{H} } \) will be addressed in a later subsection, where we establish Theorem~\ref{Thm_eGinUSE genus expansion}.
\end{proof}

\begin{proof}[Proof of Theorem~\ref{Thm_elliptic and MP} (ii)] 
By Theorem \ref{thm: main} and \eqref{Ap laguerre}, we have
\begin{equation}
    M^{ \rm L, \mathbb{C} }_{p_1,p_2,N} = \sum_{k=0}^{N-1} \sum_{r=-p_1 \wedge p_2}^{p_1\wedge p_2} \frac{(k-r)!}{k!} \frac{\Gamma(k-r+\nu+1)}{\Gamma(k+\nu+1)} (A^{p_1})^{k-r}_{k} (A^{p_2})^{k-r}_{k}.
\end{equation}
Note that \eqref{Ap asymp laguerre} can also be written as
\begin{equation*}
    (A^p)^{k-r}_k = \tau^{p+r} \sum_{l=0}^p \sum_{s\ge0} 
    \binom{p}{l} 2^{l-r-2s} \binom{l}{s} \binom{l-s}{s+r} \nu^{p-l} k^{l-s} (k+\nu)^{s+r} + O\big( (k+\nu)^{p+r-1} \big).
\end{equation*}
Applying the binomial expansion, we obtain
\begin{equation*}
    [x^{l-r}] (kx^2+2kx+k+\nu)^l = \sum_{s\ge0} 2^{l-r-2s} \binom{l}{s} \binom{l-s}{r+s} k^{l-r-s} (k+\nu)^{s+r},
\end{equation*}
where $[x^p]f(x)$ denotes the coefficient of $x^p$ in the expansion of $f(x)$. Then, we have
\begin{equation} \label{double counting 1}
\begin{split}
    &\quad [x^{p-r}] \Big( k(x+1)^2 + \nu(x+1) \Big)^p = [x^{p-r}] \sum_{l=0}^p \binom{p}{l} (\nu x)^{p-l} (kx^2+2kx+k+\nu)^l
    \\
    & = \sum_{l=0}^p \binom{p}{l} \nu^{p-l} [x^{l-r}] (kx^2+2kx+k+\nu)^l
 = \sum_{l=0}^p \sum_{s\ge0} 
    \binom{p}{l} \nu^{p-l} 2^{l-r-2s} \binom{l}{s} \binom{l-s}{s+r} k^{l-r-s} (k+\nu)^{s+r}.
\end{split}
\end{equation}
On the other hand, we also have
\begin{equation} \label{double counting 2}
    [x^{p-r}] \Big( k(x+1)^2 + \nu(x+1) \Big)^p = [x^{p-r}] \sum_{l=0}^p \binom{p}{l} k^l \nu^{p-l} (x+1)^{p+l} = \sum_{l=0}^p \binom{p}{l} \binom{p+l}{p-r} k^l \nu^{p-l}.
\end{equation}
Combining \eqref{double counting 1} and \eqref{double counting 2}, we obtain 
\begin{equation*}
    k^{-r} (A^p)^{k-r}_k = \tau^{p+r} \sum_{l=0}^p \binom{p}{l} \binom{p+l}{p-r} k^l \nu^{p-l} + O\big( (k+\nu)^{p-1} \big).
\end{equation*}
Similarly, it follows from 
\begin{equation*}
    [x^{l+r}] (kx^2+2kx+k+\nu)^l = \sum_{s\ge0} 2^{l-r-2s} \binom{l}{s} \binom{l-s}{r+s} k^{l-s} (k+\nu)^{s}
\end{equation*}
that 
\begin{equation*}
\begin{split}
    (k+\nu)^{-r} (A^p)^{k-r}_k %& = \tau^{p+r} [x^{p+r}] \Big( k(x+1)^2 + \nu(x+1) \Big)^p + O\big((k+\nu)^{p-1}\big)
    %\\
    & = \tau^{p+r} \sum_{l=0}^p \binom{p}{l} \binom{p+l}{p+r} k^l \nu^{p-l} + O\big((k+\nu)^{p-1}\big).
\end{split}
\end{equation*}
Combining the above, if $\nu=\alpha N + o(N),$ we have 
\begin{equation*}
\begin{split}
    M^{ \rm L, \mathbb{C} }_{p_1,p_2,N}
    & = \sum_{k=0}^{N-1} \sum_{|r| \le p_1 \wedge p_2} k^{-r} (k+\nu)^{-r} (A^{p_1})^{k-r}_{k} (A^{p_2})^{k-r}_{k} + O\big( (k+\nu)^{p_1+p_2-1} \big)
    \\
    & = \sum_{k=0}^{N-1} \sum_{|r| \le p_1 \wedge p_2} \sum_{l_1=0}^{p_1} \sum_{l_2=0}^{p_2} k^{l_1+l_2} \nu^{p_1+p_2-l_1-l_2} \binom{p_1}{l_1} \binom{p_1+l_1}{p_1-r} \binom{p_2}{l_2} \binom{p_2+l_2}{p_2+r} + O\big( (k+\nu)^{p_1+p_2-1} \big)
    \\
    & = N^{p_1+p_2+1} \sum_{|r| \le p_1 \wedge p_2} \sum_{l_1=0}^{p_1} \sum_{l_2=0}^{p_2} \frac{\alpha^{p_1+p_2-l_1-l_2}}{l_1+l_2+1} \binom{p_1}{l_1} \binom{p_1+l_1}{p_1-r} \binom{p_2}{l_2} \binom{p_2+l_2}{p_2+r} + O( N^{p_1+p_2} ),
\end{split}
\end{equation*}
which leads to \eqref{shifted elliptic leading order} for the complex case.

Next, we derive the asymptotic of $M^{\rm L,\mathbb{H}}_{p_1,p_2,N}$. For this, note that 
\begin{equation*}
    (A^p)^{k-r}_k = \mathfrak{a}_{p,r}+O\big((k+\nu)^{p+r-1}\big), \qquad   \mathfrak{a}_{p,r} := \tau^{p+r} \sum_{l=0}^p \binom{p}{l}\binom{p+l}{p+r} k^l \nu^{p-l} (k+\nu)^r = O\big((k+\nu)^{p+r}\big). 
\end{equation*}
We have shown
\begin{align*}
   & \mathfrak{b}_{p,r} := k^{-r}\mathfrak{a}_{p,r} = \tau^{p+r}\sum_{l=0}^p \binom{p}{l}\binom{p+l}{p-r} k^l \nu^{p-l} \\
    & \mathfrak{c}_{p,r} := (k+\nu)^{-r}\mathfrak{a}_{p,r} = \tau^{p+r}\sum_{l=0}^p \binom{p}{l}\binom{p+l}{p+r} k^l \nu^{p-l} ,
\end{align*}
and
\begin{equation*}
    \sum_{k=0}^{N-1} \sum_{r\in\mathbb{Z}} \mathfrak{b}_{p_1,r}\mathfrak{c}_{p_2,r} = L_1(p_1,p_2)N^{p_1+p_2+1} +O(N^{p_1+p_2}).
\end{equation*}

On the other hand, by \eqref{mu in sop} and \eqref{def of planar Laguerre}, we have 
\begin{equation*}
\begin{split}
     \frac{1}{2^2}\lambda_{k-r} = 2^2k(k+\nu)+O\big((k+\nu)^1\big), \quad \frac{1}{2^{2s}}\mu_{k,k-s} = k^s(k+\nu)^s+O\big((k+\nu)^{2s-1}\big).
\end{split}
\end{equation*}
Then, by \eqref{B in A}, we have that, for $n=k-r$ with $r\in\mathbb{Z}$,
\begin{align*}
    \frac{1}{2^{p+2r}}(B^p)^{2n+1}_{2k+1} &= \mathfrak{a}_{p,2r}+O\big((k+\nu)^{p+2r-1}\big), \\
    \frac{1}{2^{p+2r+1}}(B^p)^{2n}_{2k+1} &= \mathfrak{a}_{p,2r+1}-k(k+\nu)\mathfrak{a}_{p,2r-1}+O\big((k+\nu)^{p+2r}\big), \\
    \frac{1}{2^{p+2r-1}}(B^p)^{2n+1}_{2k} &= \sum_{s\ge0}k^s(k+\nu)^s\mathfrak{a}_{p,2r-2s-1}+O\big((k+\nu)^{p+2r-2}\big), \\
    \frac{1}{2^{p+2r}}(B^p)^{2n}_{2k} %&= \sum_{s\ge0} \Big(k^s(k+\nu)^s\mathfrak{a}_{p,2r-2s}-k^{s+1}(k+\nu)^{s+1}\mathfrak{a}_{p,2r-2s-2}\Big)+O\big((k+\nu)^{p+2r-1}\big) \\
    & = \mathfrak{a}_{p,2r}+O\big((k+\nu)^{p+2r-1}\big).
\end{align*}

Recall the definition \eqref{def of mathfrak m jk} of $\mathfrak{m}_{p_1,p_2,k}$. We separately investigate the first and second line of \eqref{def of mathfrak m jk}. Note that by \eqref{def of planar Laguerre} and \eqref{laguerre recurrence}, the skew-norm \eqref{skew norm} for Laguerre polynomials is given by 
\begin{equation*}
    r_k = (1-\tau^2)^2 (2k+1)! \, \Gamma(2k+2\nu+2).
\end{equation*}
For the first line of \eqref{def of mathfrak m jk}, we have
\begin{equation*}
\begin{split}
    \frac{1}{2^{p_1+p_2+1}} \frac{r_n}{r_k} \Big((B^{p_1})^{2n+1}_{2k+1} (B^{p_2})^{2n}_{2k} + (B^{p_1})^{2n}_{2k} (B^{p_2})^{2n+1}_{2k+1}\Big)
    & = k^{-2r}(k+\nu)^{-2r} \mathfrak{a}_{p_1,2r}\mathfrak{a}_{p_2,2r} + O\big((k+\nu)^{p_1+p_2-1}\big),
\end{split}
\end{equation*}
which gives rise to 
\begin{equation*}
    \frac{1}{2^{p_1+p_2+1}} \sum_{n\in\mathbb{Z}} \frac{r_n}{r_k} \Big((B^{p_1})^{2n+1}_{2k+1} (B^{p_2})^{2n}_{2k} + (B^{p_1})^{2n}_{2k} (B^{p_2})^{2n+1}_{2k+1}\Big) = \sum_{r=\text{even}} \mathfrak{b}_{p_1,r} \mathfrak{c}_{p_2,r} + O\big((k+\nu)^{p_1+p_2-1}\big).
\end{equation*}
Next, observe that the leading order of the second line of \eqref{def of mathfrak m jk} simplifies as 
\begin{equation*}
\begin{split}
    & \quad \frac{1}{2^{p_1+p_2}} \sum_{n\in\mathbb{Z}} \frac{r_n}{r_k} (B^{p_1})^{2n}_{2k+1}(B^{p_2})^{2n+1}_{2k}
    \\
    & = \sum_{r\in\mathbb{Z}} \sum_{s\ge0} k^{s-2r}(k+\nu)^{s-2r} \mathfrak{a}_{p_1,2r+1}\mathfrak{a}_{p_2,2r-2s-1}
    \\
    & \quad - \sum_{r\in\mathbb{Z}} \sum_{s\ge0} k^{s-2r+1}(k+\nu)^{s-2r+1} \mathfrak{a}_{p_1,2r-1}\mathfrak{a}_{p_2,2r-2s-1} + O\big((k+\nu)^{p_1+p_2-1}\big)
    \\
    & =  \sum_{r\in\mathbb{Z}} \Big( \sum_{s\ge0} k^{s-2r}(k+\nu)^{s-2r} \mathfrak{a}_{p_1,2r+1}\mathfrak{a}_{p_2,2r-2s-1} - \sum_{s\ge1} k^{s-1-2r}(k+\nu)^{s-1-2r} \mathfrak{a}_{p_1,2r+1}\mathfrak{a}_{p_2,2r-2(s-1)-1} \Big)
    \\
    & \quad -k^{-2r-1}(k+\nu)^{-2r-1} \mathfrak{a}_{p_1,2r+1}\mathfrak{a}_{p_2,2r+1}  + O\big((k+\nu)^{p_1+p_2-1}\big)
    \\
    & = -k^{-2r-1}(k+\nu)^{-2r-1} \mathfrak{a}_{p_1,2r+1} \mathfrak{a}_{p_2,2r+1} + O\big((k+\nu)^{p_1+p_2-1}\big).
\end{split}
\end{equation*}
Here, the second identity follows from shifting the indices $r\mapsto r+1$, $s\mapsto s+1$.
Therefore, we have
\begin{equation*}
    \frac{1}{2^{p_1+p_2+1}} \sum_{n\in\mathbb{Z}} \frac{r_n}{r_k} \Big( (B^{p_1})^{2n}_{2k+1}(B^{p_2})^{2n+1}_{2k}+(B^{p_1})^{2n}_{2k+1}(B^{p_2})^{2n+1}_{2k}\Big) = \sum_{r=\text{odd}} \mathfrak{b}_{p_1,r} \mathfrak{c}_{p_2,r} + O\big((k+\nu)^{p_1+p_2-1}\big).
\end{equation*} 
Thus we conclude that if $\nu=\alpha N+o(N)$,
\begin{equation*}
\begin{split}
    M^{\rm L, \mathbb{H}}_{p_1,p_2,N} = \sum_{k=0}^{N-1} \sum_{r\in\mathbb{Z}} \mathfrak{b}_{p_1,r} \mathfrak{c}_{p_2,r} + O\big((k+\nu)^{p_1+p_2-1}\big) = L_1(p_1,p_2)N^{p_1+p_2+1} +O(N^{p_1+p_2}),
\end{split}
\end{equation*}
which completes the proof. 
\end{proof}

\subsection{Proof of Theorem~\ref{Thm_eGinibre 2}} 

The proof of Theorem~\ref{Thm_eGinibre 2} relies on the application of suitable differential operators to the kernels in \eqref{def of kernel}, which serves to reduce the number of terms in the expression. This is followed by integration by parts, which facilitates the computation of the spectral moments.

\medskip

We first show \eqref{Thm 2.5_1}.  
In \cite[Proposition 2.3]{LR16}, it was shown that the kernel $K^{\rm H,\mathbb{C}}_{N}$ satisfies 
\begin{equation}
    \partial_z \Big( K^{\rm H,\mathbb{C}}_{N}(z,\overline{z}) \Big) = \frac{\omega^{\rm H}(z)}{1-\tau^2} \frac{1}{h_{N-1}} \Big( \tau \, p_N(z) p_{N-1}(\overline{z}) - p_{N-1}(z) p_N(\overline{z}) \Big).
\end{equation}
Integration by parts gives us that
\begin{equation*}
\begin{split}
    M^{\rm H,\mathbb{C}}_{p_1,p_2,N} 
    & = \frac{1}{1-\tau^2} \frac{1}{h_{N-1}} \int_\mathbb{C} \frac{1}{p_1+1} z^{p_1+1} \overline{z}^{p_2} \Big( p_{N-1}(z) p_N(\overline{z}) - \tau \, p_N(z) p_{N-1}(\overline{z}) \Big) \omega^{\rm H}(z) \, dA(z)
    \\
    & = \frac{1}{1-\tau^2} \frac{1}{p_1+1} \frac{1}{h_{N-1}} \bigg[ \Big\langle \sum_{j_1=N-p_1-2}^{N+p_1} (A^{p_1})^{j_1}_{N-1} \, p_{j_1}, \sum_{j_2=N-p_2}^{N+p_2} (A^{p_2})^{j_2}_N \, p_{j_2} \Big\rangle
    \\
    & \hspace{110pt} - \tau \Big\langle \sum_{j_1=N-p_1-1}^{N+p_1+1} (A^{p_1})^{j_1}_{N} \, p_{j_1}, \sum_{j_2=N-1-p_2}^{N-1+p_2} (A^{p_2})^{j_2}_{N-1} \, p_{j_2} \Big\rangle \bigg]
    \\
    & = \frac{1}{1-\tau^2} \frac{1}{p_1+1} \sum_{n=N-1-(p_1+1)\vee p_2}^{N+(p_1+1)\vee p_2} \frac{n!}{(N-1)!} \Big[ (A^{p_1+1})^n_{N-1} (A^{p_2})^n_N - \tau (A^{p_1+1})^n_N (A^{p_2})^n_{N-1} \Big].
\end{split}
\end{equation*} 
Therefore we obtain \eqref{Thm 2.5_1}.  

\medskip
Next, we show  \eqref{Thm 2.5_2}. By \cite[Proposition 1.1]{BE23}, we have 
\begin{equation}
    \Big(\partial_z-\frac{z}{1+\tau}\Big) \widehat{K}^{\rm H,\mathbb{H}}_{N}(z,w) = \frac{1}{2(1-\tau^2)} \widehat{K}^{\rm H,\mathbb{C}}_{2N}(z,w) - \frac{1}{2(1-\tau^2)} S_{N}(z,w)
\end{equation}
where
\begin{equation*}
    S_{N}(z,w) = \sum_{j=0}^{N-1} \frac{1}{h_{2N}} \frac{(2N)!!}{(2j)!!} p_{2N}(z) p_{2j}(w).
\end{equation*}

Note that
\begin{equation*}
    \Big(\partial_z+\frac{z}{1+\tau}\Big) \omega^{\rm H}(z) = \frac{1}{1-\tau^2} (z-\overline{z}) \omega^{\rm H}(z).
\end{equation*}
Then, integration by parts gives 
\begin{equation} \label{eq: partial z}
    M^{\rm H,\mathbb{H}}_{p_1,p_2,N} - (1-\tau^2) \int_\mathbb{C} p_1 z^{p_1-1} \overline{z}^{p_2} K^{\rm H,\mathbb{H}}_{N}(z,\overline{z}) \, dA(z) = \frac{1}{2} M^{\rm H,\mathbb{C}}_{p_1,p_2,2N} - \frac{1}{2} \int_\mathbb{C} z^{p_1} \overline{z}^{p_2} S_{N}(z,\overline{z}) \omega^{\rm H}(z) \, dA(z).
\end{equation}
A similar argument, with the roles of \( z \) and \( \overline{z} \) interchanged, leads to
\begin{equation} \label{eq: partial w}
    M^{\rm H,\mathbb{H}}_{p_1,p_2,N} + (1-\tau^2) \int_\mathbb{C} p_2 z^{p_1} \overline{z}^{p_2-1} K^{\rm H,\mathbb{H}}_{N}(z,\overline{z}) \, dA(z) = \frac{1}{2} M^{\rm H,\mathbb{C}}_{p_1,p_2,2N} - \frac{1}{2} \int_\mathbb{C} z^{p_1} \overline{z}^{p_2} S_{N}(\overline{z},z) \omega^{\rm H}(z) \, dA(z).
\end{equation}
Combining (\ref{eq: partial z}) and (\ref{eq: partial w}), we obtain
\begin{equation*}
\begin{split}
    & \quad \Big( \frac{1}{p_1} + \frac{1}{p_2} \Big) M^{\rm H,\mathbb{H}}_{p_1,p_2,N} - (1-\tau^2) M^{\rm H,\mathbb{H}}_{p_1-1,p_2-1,N} 
    \\
    & = \frac{1}{2} \Big( \frac{1}{p_1} + \frac{1}{p_2} \Big) M^{\rm H,\mathbb{C}}_{p_1,p_2,2N} - \frac{1}{2} \int_\mathbb{C} z^{p_1} \overline{z}^{p_2} \Big( \frac{1}{p_1} S_{N}(z,\overline{z}) + \frac{1}{p_2} S_{N}(\overline{z},z) \Big) \omega^{\rm H}(z) \, dA(z).
\end{split}
\end{equation*}
Note here that 
\begin{equation*}
\begin{split}
    \int_\mathbb{C} z^{p_1} \overline{z}^{p_2} \ S_{N}(z,\overline{z}) \omega^{\rm H} \, dA(z)
    & = \sum_{j=0}^{N-1} \frac{1}{(2N)! \sqrt{1-\tau^2}} \frac{(2N)!!}{(2j)!!} \sum_{m,n\ge0} (A^{p_1})^m_{2N} \ (A^{p_2})^n_{2j} \langle p_m, p_n \rangle
    \\
    & = \sum_{j=0}^{N-1} \sum_{n=2N-p_1}^{2N+p_1} \frac{n!}{(2N)!} \frac{(2N)!!}{(2j)!!} (A^{p_1})^n_{2N} \ (A^{p_2})^n_{2j},
\end{split}
\end{equation*} 
which completes the proof. \qed

\subsection{Proof of Theorem~\ref{Thm_eGinUSE genus expansion}}

In order to prove the Theorem~\ref{Thm_eGinUSE genus expansion} (i), we define 
\begin{equation} \label{def of m}
\begin{split}
    m_{p_1,p_2,N}^{l_1,l_2} :=  \sum_{r\equiv p}   \frac{\tau^{\frac{p_1+p_2}{2}+r}}{1-\tau^2} & \bigg[ \kappa^{(1)}_r \Big(N-\frac{p_1+r}{2}+l_1\Big)_{\frac{p_1-r}{2}-l_1+1} \Big(N-\frac{p_2+r}{2}+l_2+1\Big)_{\frac{p_2+r}{2}-l_2}
    \\
    & - \kappa^{(2)}_r \Big(N-\frac{p_1+r}{2}+l_1+1\Big)_{\frac{p_1-r}{2}-l_1+1} \Big(N-\frac{p_2+r}{2}+l_2+1\Big)_{\frac{p_2+r}{2}-l_2} \bigg],
\end{split}
\end{equation}
where
\begin{equation*}
\begin{split}
    & \kappa^{(1)}_r := \frac{1}{2^{l_1} l_1!} \frac{1}{2^{l_2} l_2!} \frac{p_1!}{(\frac{p_1-r}{2}-l_1+1)!(\frac{p_1+r}{2}-l_1)!} \frac{p_2!}{(\frac{p_2-r}{2}-l_2)!(\frac{p_2+r}{2}-l_2)!},
    \\
    & \kappa^{(2)}_r := \frac{1}{2^{l_1} l_1!} \frac{1}{2^{l_2} l_2!} \frac{p_1!}{(\frac{p_1-r}{2}-l_1+1)!(\frac{p_1+r}{2}-l_1)!} \frac{p_2!}{(\frac{p_2-r}{2}-l_2+1)!(\frac{p_2+r}{2}-l_2-1)!}. 
\end{split}
\end{equation*}
Here, \( r \equiv p \) is shorthand for the congruence \( r \equiv p_1 \equiv p_2 \pmod{2} \). Note that the summation in \eqref{def of m} is finite since we regard the reciprocal factorial \( \frac{1}{n!} \) as \( \frac{1}{\Gamma(n+1)} = 0 \) for negative integers \( n \), as explained in section \ref{Section_Preliminaries} after \eqref{laguerre linearisation}.
Then by Theorem~\ref{Thm_eGinibre 2} (i),
\begin{equation}
    M^{\rm H,\mathbb{C}}_{p_1,p_2,N} = \sum_{l_1=0}^{\lfloor p_1/2 \rfloor} \sum_{l_2=0}^{\lfloor p_2/2 \rfloor} m_{p_1,p_2,N}^{l_1,l_2}.
\end{equation}

Since $m_{p_1,p_2,N}^{l_1,l_2}$ is a polynomial in $N$, we consider its expansion
\begin{equation}
    m_{p_1,p_2,N}^{l_1,l_2} =: \sum_{g=0}^{\frac{p_1+p_2}{2}-l_1-l_2} \mathcal{C}_g(p_1,p_2,l_1,l_2) N^{\frac{p_1+p_2}{2}-l_1-l_2+1-g}.
\end{equation}
Then, one can observe that 
\begin{equation*}
    M^{\rm H,\mathbb{C}}_{p_1,p_2,N} = C_1(p_1,p_2) N^{\frac{p_1+p_2}{2}+1} + C_2(p_1,p_2) N^{\frac{p_1+p_2}{2}} + O(N^{\frac{p_1+p_2}{2}-1}),
\end{equation*}
and that
\begin{align}
 \label{eq: C1}
    C_1(p_1,p_2)& = \mathcal{C}_0(p_1,p_2,0,0),
    \\
     \label{eq: C2}
    C_2(p_1,p_2)& = \mathcal{C}_1(p_1,p_2,0,0) + \mathcal{C}_0(p_1,p_2,1,0) + \mathcal{C}_0(p_1,p_2,0,1).
\end{align} 

First, we check that \eqref{eq: C1} yields the same result as in \eqref{def of C1 p1 p2}. Indeed,
\begin{equation*}
\begin{split}
    \mathcal{C}_0(p_1,p_2,0,0) & = \frac{1}{1-\tau^2} \sum_{r\equiv p} \tau^{\frac{p_1+p_2}{2}+r} (\kappa^{(1)}_r-\kappa^{(2)}_r) \Big\vert_{l_1=l_2=0}
    \\
    & = \frac{1}{1-\tau^2} \sum_{r\equiv p} \tau^{\frac{p_1+p_2}{2}+r} \bigg[ \frac{1}{\frac{p_1-r}{2}+1} \binom{p_1}{\frac{p_1+r}{2}} \binom{p_2}{\frac{p_2+r}{2}} - \frac{1}{\frac{p_1+r}{2}} \binom{p_1}{\frac{p_1+r}{2}-1} \binom{p_2}{\frac{p_2+r}{2}-1} \bigg].
\end{split}
\end{equation*}
Then we have
\begin{equation*}
\begin{split}
    &\quad \Big(\frac{p_1+p_2}{2}+1\Big) \mathcal{C}_0(p_1,p_2,0,0)
    \\
    & = \frac{1}{1-\tau^2} \sum_{r\equiv p} \tau^{\frac{p_1+p_2}{2}+r} \bigg[ \bigg(1+\frac{\frac{p_2+r}{2}}{\frac{p_1-r}{2}+1}\bigg)\binom{p_1}{\frac{p_1+r}{2}} \binom{p_2}{\frac{p_2+r}{2}} - \bigg(1+\frac{\frac{p_2-r}{2}+1}{\frac{p_1+r}{2}}\bigg)\binom{p_1}{\frac{p_1+r-2}{2}} \binom{p_2}{\frac{p_2+r-2}{2}} \bigg]
    \\
    & = \frac{1}{1-\tau^2} \sum_{r\equiv p} \tau^{\frac{p_1+p_2}{2}+r} \bigg[ \binom{p_1}{\frac{p_1+r}{2}} \binom{p_2}{\frac{p_2+r}{2}} - \binom{p_1}{\frac{p_1+r-2}{2}} \binom{p_2}{\frac{p_2+r-2}{2}} \bigg]
 = \sum_{r\equiv p} \tau^{\frac{p_1+p_2}{2}+r} \binom{p_1}{\frac{p_1+r}{2}} \binom{p_2}{\frac{p_2+r}{2}}.
\end{split}
\end{equation*}

Next, we compute \eqref{eq: C2}. Using
\begin{equation*}
    (x-m)_n = x^n - \frac{n(2m-n+1)}{2} x^{n-1} + O(x^{n-2}),
\end{equation*}
we have
\begin{align*}
   \mathcal{C}_1(p_1,p_2,0,0) = &   \sum_{r\equiv p} \frac{ \tau^{\frac{p_1+p_2}{2}+r} }{1-\tau^2} \bigg[ -\frac{1}{2} \frac{1}{\frac{p_1-r}{2}+1}\binom{p_1}{\frac{p_1+r}{2}}\binom{p_2}{\frac{p_2+r}{2}} \bigg( \Big(\frac{p_1-r}{2}+1\Big)\frac{p_1+3r}{2} + \Big( \frac{p_2+r}{2}-1 \Big) \frac{p_2+r}{2} \bigg)
    \\
    & +\frac{1}{2} \frac{1}{\frac{p_1-r}{2}+1}\binom{p_1}{\frac{p_1+r}{2}}\binom{p_2}{\frac{p_2+r}{2}-1} \bigg( \Big(\frac{p_1-r}{2}+1\Big)\Big(\frac{p_1+3r}{2}-2\Big) + \Big( \frac{p_2+r}{2}-1 \Big) \frac{p_2+r}{2} \bigg) \bigg].
\end{align*} 
On the other hand, we also have 
\begin{align*}
 \mathcal{C}_0(p_1,p_2,1,0) & =  \sum_{r\equiv p}  \frac{ \tau^{\frac{p_1+p_2}{2}+r}   }{2(1-\tau^2) } \bigg[ \frac{p_1! \, p_2! }{(\frac{p_1-r}{2})!(\frac{p_1+r-2}{2})!(\frac{p_2-r}{2})!(\frac{p_2+r}{2})!} - \frac{p_1!\,p_2!}{(\frac{p_1-r}{2})!(\frac{p_1+r-2}{2})! (\frac{p_2-r+2}{2})!(\frac{p_2+r-2}{2})!} \bigg],
 \\
   \mathcal{C}_0(p_1,p_2,0,1) & =   \sum_{r\equiv p}  \frac{ \tau^{\frac{p_1+p_2}{2}+r} }{2(1-\tau^2) } \bigg[ \frac{p_1! \,p_2 !}{(\frac{p_1-r+2}{2})!(\frac{p_1+r}{2})!(\frac{p_2-r-2}{2})!(\frac{p_2+r-2}{2})!} - \frac{p_1!\,p_2!}{(\frac{p_1-r+2}{2})!(\frac{p_1+r}{2})! (\frac{p_2-r}{2})!(\frac{p_2+r-4}{2})!} \bigg].
\end{align*} 
Combining all of the above, straightforward computations give rise to 
\begin{equation*}
\begin{split}
    C_2(p_1,p_2)
    & = \frac{1}{1-\tau^2} \sum_{r\equiv p} \tau^{\frac{p_1+p_2}{2}+r} \bigg[ \binom{p_1}{\frac{p_1+r}{2}} \binom{p_2}{\frac{p_2+r}{2}} \Big( -\frac{r}{2} \Big) - \binom{p_1}{\frac{p_1+r-2}{2}} \binom{p_2}{\frac{p_2+r-2}{2}} \Big( -\frac{r-2}{2} \Big) \bigg]
    \\
    & = -\sum_{r\in\mathcal{I}_{p_1 \wedge p_2}} \tau^{\frac{p_1+p_2}{2}+r} \binom{p_1}{\frac{p_1+r}{2}} \binom{p_2}{\frac{p_2+r}{2}} \frac{r}{2}.
\end{split}
\end{equation*}
This shows Theorem~\ref{Thm_eGinUSE genus expansion} (i). 

\medskip Next, we show the second assertion (ii) of the theorem. 
Let
\begin{equation*}
    F(p_1,p_2,N) := \sum_{k=0}^{N-1} \sum_{n=2N - p_1 \vee p_2}^{2N + p_1 \vee p_2} \frac{n!}{(2N)!} \frac{(2N)!!}{(2k)!!} \bigg[ \frac{p_1}{p_1+p_2} (A^{p_1})^{n}_{2k} (A^{p_2})^{n}_{2N} + \frac{p_2}{p_1+p_2} (A^{p_1})^{n}_{2N} (A^{p_2})^{n}_{2k} \bigg].
\end{equation*}
Putting $r=2N-n$, $s=N-k$ and considering that $(A^p)^\alpha_\beta\neq0$ only if $|\alpha-\beta|\le p$, we have 
\begin{equation*}
\begin{split}
    F(p_1,p_2,N) & = \sum_{r\in\mathcal{I}_{p_1 \vee p_2}} \sum_{s=1}^{(p_1+p_2)/2} \frac{(2N-r)!}{(2N)!} \frac{(2N)!!}{(2N-2s)!!} 
    \\
    & \qquad \times \bigg[  \frac{p_1}{p_1+p_2} (A^{p_1})^{2N-r}_{2N-2s} (A^{p_2})^{2N-r}_{2N} + \frac{p_2}{p_1+p_2} (A^{p_1})^{2N-r}_{2N} (A^{p_2})^{2N-r}_{2N-2s} \bigg].
\end{split}
\end{equation*}
Then, by \eqref{Ap asymptotics}, we obtain 
\begin{equation*}
\begin{split}
    F(p_1,p_2,N) & = (2N)^{\frac{p_1+p_2}{2}} \sum_{r\in\mathcal{I}_{p_1 \vee p_2}} \sum_{s=1}^{(p_1+p_2)/2} \tau^{\frac{p_1+p_2}{2}+r-s}
    \\
    & \qquad \times \bigg[ \frac{p_1}{p_1+p_2} \binom{p_1}{\frac{p_1+r}{2}-s} \binom{p_2}{\frac{p_2+r}{2}} + \frac{p_2}{p_1+p_2} \binom{p_1}{\frac{p_1+r}{2}} \binom{p_2}{\frac{p_2+r}{2}-s} \bigg] + O( N^{\frac{p_1+p_2}{2}-1}).
\end{split}
\end{equation*}
Then, by Theorem~\ref{Thm_eGinibre 2}~(ii), the desired behaviour \eqref{eq: C2prime} follows by induction. This completes the proof. \qed

\appendix

\section{Moments of the elliptic law} \label{Appendix_moment elliptic}
Here, we give a direct computation of \eqref{eq_elliptic law moment} using the conformal map.

\begin{prop} \label{Lem_mixed moments elliptic law}
Suppose that $p_1+p_2$ is even.  
Then we have 
\begin{equation} \label{elliptic integral mixed}
\frac{1}{1-\tau^2}\int_S z^{p_1} \overline{z}^{p_2} \,dA(z) 
=  C_1(p_1,p_2),
\end{equation}
where $S$ and $C_1(p_1,p_2)$ are given by \eqref{def of ellipitc law} and \eqref{def of C1 p1 p2}, respectively.  
\end{prop}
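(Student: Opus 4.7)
The plan is to sidestep the Schwarz-function route and exploit the real-linear bijection
$$\phi: \D \to S, \qquad \phi(w) = w + \tau\, \overline{w} = (1+\tau)\re(w) + i(1-\tau)\im(w),$$
which is simply the anisotropic stretching $(u,v)\mapsto((1+\tau)u,(1-\tau)v)$ in real coordinates, and therefore has constant Jacobian $1-\tau^2$. Writing $z = w + \tau \overline{w}$ (so that $\overline{z} = \overline{w} + \tau w$) and applying this change of variables transforms~\eqref{elliptic integral mixed} into the assertion
$$\int_\D (w + \tau \overline{w})^{p_1}\, (\overline{w} + \tau w)^{p_2}\, dA(w) \;=\; C_1(p_1,p_2),$$
with the $(1-\tau^2)^{-1}$ on the left absorbed cleanly into the Jacobian.

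Next, I would expand each factor by the binomial theorem and integrate term-by-term, invoking the elementary moment identity
$$\int_\D w^m\, \overline{w}^n\, dA(w) = \frac{\delta_{m,n}}{m+1}.$$
Writing $(w+\tau\overline{w})^{p_1} = \sum_j \binom{p_1}{j} \tau^j w^{p_1-j} \overline{w}^j$ and $(\overline{w}+\tau w)^{p_2} = \sum_k \binom{p_2}{k}\tau^k w^k \overline{w}^{p_2-k}$, the Kronecker delta forces the exponent-matching condition $p_1 - j + k = j + p_2 - k$, i.e. $k = j + (p_2 - p_1)/2$. The common value of the matched exponent equals $(p_1+p_2)/2$, which immediately supplies the prefactor $1/((p_1+p_2)/2 + 1)$ appearing in~\eqref{def of C1 p1 p2}. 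The hypothesis that $p_1 + p_2$ is even ensures that the shift $k - j$ is a genuine integer, so the collapsed sum is nonempty.

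To finish, I would reindex by $r := 2j - p_1$, yielding $j = (p_1+r)/2$, $k = (p_2+r)/2$, and $\tau^{j+k} = \tau^{(p_1+p_2)/2 + r}$; this recasts the residual single sum as the one defining $C_1(p_1,p_2)$, with $r$ inheriting the parity of $p_1$ (equivalently of $p_2$). The range $r \in \mathcal{I}_{p_1 \wedge p_2}$ appears automatically because the binomials $\binom{p_i}{(p_i + r)/2}$ vanish for $|r| > p_i$. The main obstacle here is essentially nil: there is no combinatorial identity or analytic subtlety to verify, and the argument reduces to a single linear change of variables followed by disk-moment orthogonality. The only delicate point is the parity bookkeeping, which is handled by the evenness of $p_1 + p_2$.
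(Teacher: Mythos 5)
Your argument is correct, and it takes a genuinely different route from the paper. The paper converts the area integral into a contour integral over $\partial S$ via Green's formula, pulls it back through the exterior Joukowsky map $f(w)=w+\tau/w$, uses the Schwarz-function relation $\overline{w}=1/w$ on the unit circle to extract a residue (a coefficient of a Laurent polynomial), and then needs a final nontrivial rearrangement of binomial terms to absorb the factor $1/(1-\tau^2)$. You instead observe that the real-linear map $w\mapsto w+\tau\overline{w}$ carries $\D$ onto $S$ with constant Jacobian $1-\tau^2$, which cancels the prefactor at the outset, and then you reduce everything to the elementary disk moments $\int_\D w^m\overline{w}^n\,dA(w)=\delta_{m,n}/(m+1)$; the exponent-matching condition $k=j+(p_2-p_1)/2$ (an integer precisely because $p_1+p_2$ is even) collapses the double sum, and the reindexing $r=2j-p_1$ lands exactly on \eqref{def of C1 p1 p2}, with the range $\mathcal{I}_{p_1\wedge p_2}$ enforced by the vanishing of the binomial coefficients. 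Your approach is shorter and more elementary, avoiding both Green's formula and the telescoping step; what the paper's boundary-integral method buys is generality, since it extends to domains (such as the support of the non-Hermitian Marchenko--Pastur law in \eqref{eq_non-Hermitian MP law moment}) that are not linear images of the disk and carry a nonconstant density, where no such change of variables is available.
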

\begin{proof}%[Proof of Lemma~\ref{Lem_mixed moments elliptic law}] 
Let us denote by
\begin{equation}
f(z)= z+\frac{\tau}{z}, \qquad f: \mathbb{D}^c \to S^c,   
\end{equation}
the Joukowsky transform.
By applying Green’s formula, and a change of variables we obtain
\begin{align*}
\int_S z^{p_1} \overline{z}^{p_2} \,dA(z) &= \frac{1}{p_2+1}\frac{1}{2\pi i}  \int_{ \partial S } z^{p_1} \overline{z}^{ p_2+1 } \,dz = \frac{1}{p_2+1}\frac{1}{2\pi i}  \int_{ \partial \mathbb{D} } f(w)^{p_1} f(\overline{w})^{ p_2+1 } \,f'(w)\,dw
\\
&=  \frac{1}{p_2+1}\frac{1}{2\pi i}  \int_{ \partial \mathbb{D} } f(w)^{p_1} f(1/w)^{ p_2+1 } \,f'(w)\,dw
%\\
%&=  \frac{1}{p_2+1}\frac{1}{2\pi i}  \int_{ \partial \mathbb{D} } \Big( \frac{ w^2+\tau }{ w } \Big)^{p_1}  \Big( \frac{\tau w^2+1}{w} \Big)^{ p_2+1 } \, \frac{w^2-\tau}{w^2} \,dw
\\
&=  \frac{1}{p_2+1}\frac{1}{2\pi i}  \int_{ \partial \mathbb{D} } \frac{ (w^2+\tau)^{p_1} (\tau w^2+1)^{p_2+1} (w^2-\tau)  }{ w^{ p_1+p_2+3 } } \,dw. 
\end{align*}
Recall that $[z^n]f(z)$ denotes the coefficient of $z^n$ in $f(z)$. 
Then we have 
\begin{equation*}
\int_S z^{p_1} \overline{z}^{p_2} \,dA(z)= \frac{1}{p_2+1} [w^{ p_1+p_2+2 }] (w^2+\tau)^{p_1} (\tau w^2+1)^{p_2+1} (w^2-\tau).
\end{equation*}
Note that
\begin{align*}
&\quad [w^{p_1+p_2+2}] (w^2+\tau)^{p_1} (\tau w^2+1)^{p_2+1} (w^2-\tau)
\\
&= [w^{p_1+p_2}] (w^2+\tau)^{p_1} (\tau w^2+1)^{p_2+1} - \tau [w^{p_1+p_2+2}] (w^2+\tau)^{p_1} (\tau w^2+1)^{p_2+1}.
\end{align*}
Furthermore, we have
\begin{align*}
[w^{p_1+p_2}] (w^2+\tau)^{p_1} (\tau w^2+1)^{p_2+1}
&= \sum_{l=0}^{p_1} \binom{p_1}{l } \binom{p_2+1}{ \frac{p_2-p_1}{2}+l } \tau^{ (p_2-p_1)/2+2l },
\\
[w^{p_1+p_2+2}] (w^2+\tau)^{p_1} (\tau w^2+1)^{p_2+1}
&= \sum_{l=0}^{p_1} \binom{p_1}{l } \binom{p_2+1}{ \frac{p_2-p_1}{2}+l+1 } \tau^{ (p_2-p_1)/2+2l+1 }.
\end{align*}
Then, we have
\begin{equation*}
\begin{split}
    \int_S z^{p_1} \overline{z}^{p_2} \,dA(z)
    & = \frac{1}{p_2+1} \sum_{l=0}^{p_1} \bigg[ \tau^{ (p_2-p_1)/2+2l } \binom{p_1}{l} \binom{p_2+1}{\frac{p_2-p_1}{2}+l} - \tau^{ (p_2-p_1)/2+2l+2 } \binom{p_1}{l} \binom{p_2+1}{\frac{p_2-p_1}{2}+l+1} \bigg]
    \\
    & = \sum_{l=0}^{p_1+1} \tau^{ (p_2-p_1)/2+2l } \bigg[ \frac{1}{\frac{p_1+p_2}{2}-l+1} \binom{p_1}{l} \binom{p_2}{\frac{p_2-p_1}{2}+l} - \frac{1}{\frac{p_2-p_1}{2}+l}\binom{p_1}{l-1} \binom{p_2}{\frac{p_2-p_1}{2}+l-1} \bigg].
\end{split}
\end{equation*}
By rearranging the terms, it follows that
%\begin{equation*}
%\begin{split}
%    \Big(\frac{p_1+p_2}{2}+1\Big) \int_S z^{p_1} \overline{z}^{p_2} \,dA(z) & = \sum_{l=0}^{p_1+1} \tau^{ (p_2-p_1)/2+2l } \bigg[ \bigg(1+\frac{l}{\frac{p_1+p_2}{2}-l+1} \bigg) \binom{p_1}{l} \binom{p_2}{\frac{p_2-p_1}{2}+l}
%    \\
 %   & \hspace{110pt} - \bigg( 1+\frac{p_1+1-l}{\frac{p_2-p_1}{2}+l} \bigg) \binom{p_1}{l-1} \binom{p_2}{\frac{p_2-p_1}{2}+l-1} \bigg]
 %   \\
 %   & = \sum_{l=0}^{p_1+1} \tau^{ (p_2-p_1)/2+2l } \bigg[ \binom{p_1}{l} \binom{p_2}{\frac{p_2-p_1}{2}+l} - \binom{p_1}{l-1} \binom{p_2}{\frac{p_2-p_1}{2}+l-1} \bigg]
%\end{split}
%\end{equation*}
%so that
\begin{equation*}
    \frac{1}{1-\tau^2} \int_S z^{p_1} \overline{z}^{p_2} \,dA(z) = \frac{1}{\frac{p_1+p_2}{2}+1} \sum_{l=0}^{p_1} \tau^{ (p_2-p_1)/2+2l } \binom{p_1}{l} \binom{p_2}{\frac{p_2-p_1}{2}+l}.
\end{equation*}
Setting $r=2l-p_1$, we obtain \eqref{def of C1 p1 p2}.
\end{proof}

\section{Spectral moments of the elliptic Ginibre ensembles} \label{Appendix_eGinibre spectral moments}
We present an explicit formula for the spectral moments of the elliptic Ginibre ensembles. It is a direct consequence of \eqref{Ap Hermite} and Theorem \ref{thm: main}. Let $\mathcal{I}_p$ as in \eqref{I_p def}. 
We also define  for $p_1,p_2$ even:
\begin{equation}
\begin{split}
    f_{k,s,l_1,l_2}(p_1,p_2)
    & := \sum_{r=-\frac{p_1}{2}+l_1}^{\frac{p_1}{2}-l_1} \tau^{\frac{p_1+p_2}{2}+2r-s} \frac{(2k+1-2r)!}{(2k+1)!}\frac{ \binom{2k+1}{\frac{p_1}{2}-l_1+r}  }{(\frac{p_1}{2}-l_1-r)!}   \frac{ \binom{2k-2s}{\frac{p_2}{2}-l_2+r-s}  }{(\frac{p_2}{2}-l_2+s-r)!}  
    \\
    & \quad - \tau^{\frac{p_1+p_2}{2}+2r-s-1} \frac{(2k+2-2r)!}{(2k+1)!}\frac{ \binom{2k+1}{\frac{p_1}{2}-l_1+r}  }{(\frac{p_1}{2}-l_1-r)!}   \frac{  \binom{2k-2s}{\frac{p_2}{2}-l_2+r-s-1}  }{(\frac{p_2}{2}-l_2+s-r+1)!} ,
\end{split}
\end{equation}
and for $p_1,p_2$ odd:
\begin{equation}
\begin{split}
f_{k,s,l_1,l_2}(p_1,p_2) 
    & := \sum_{r=-\frac{p_1+1}{2}+l_1}^{\frac{p_1+1}{2}-l_1} - \tau^{\frac{p_1+p_2}{2}+2r-s} \frac{(2k+1-2r)!}{(2k+1)!}\frac{ \binom{2k+1}{\frac{p_1+1}{2}-l_1+r} }{(\frac{p_1-1}{2}-l_1-r)!}   \frac{ \binom{2k-2s}{\frac{p_2-1}{2}-l_2+r-s} }{(\frac{p_2+1}{2}-l_2+s-r)!}  
    \\
    & \quad + \tau^{\frac{p_1+p_2}{2}+2r-s-1} \frac{(2k+2-2r)!}{(2k+1)!}\frac{ \binom{2k+1}{\frac{p_1-1}{2}-l_1+r} }{(\frac{p_1+1}{2}-l_1-r)!}   \frac{ \binom{2k-2s}{\frac{p_2-1}{2}-l_2+r-s} }{(\frac{p_2+1}{2}-l_2+s-r)!}  .
\end{split}
\end{equation}

\begin{cor} \label{Cor_eGinibre moments}
    We have the following. 
    \begin{itemize}
        \item[\textup (i)]  For the complex elliptic Ginibre ensemble
    \begin{align}
    \begin{split}
        M_{p_1,p_2,N}^{\rm H, \mathbb{C}} &= 
         \sum_{k=0}^{N-1} \sum_{ r \in \mathcal{I}_{p_1 \wedge p_2} } \sum_{l_1=0}^{ \lfloor p_1/2 \rfloor } \sum_{l_2=0}^{ \lfloor p_2/2 \rfloor } \tau^{ \frac{p_1+p_2}{2} + r } \frac{(k-r)!}{k!} 
        \\
        &\quad \times \frac{p_1!}{2^{l_1} \ l_1! (\frac{p_1-r}{2}-l_1)!} \binom{k}{\frac{p_1+r}{2}-l_1} \frac{p_2!}{2^{l_2} \ l_2! (\frac{p_2-r}{2}-l_2)!} \binom{k}{\frac{p_2+r}{2}-l_2}.
    \end{split}
    \end{align}
    In particular,
    \begin{equation}
        M_{2p,0,N}^{\rm H, \mathbb{C}} = \tau^p (2p-1)!! \sum_{k=0}^{N-1} \sum_{l=0}^{p} 2^l \binom{p}{l} \binom{k}{l} =  \tau^p M_{2p,N}^{ \rm GUE }. 
    \end{equation}
    \item[\textup (ii)] For the symplectic elliptic Ginibre ensemble
\begin{equation}
\begin{split}
    & M_{p_1,p_2,N}^{\rm H, \mathbb{H} } = \frac{1}{2} \sum_{k=0}^{N-1} \sum_{s=0}^{k\wedge\frac{p_1+p_2}{2}} \sum_{l_1=0}^{p_1/2} \sum_{l_2=0}^{p_2/2} \frac{(2k)!!}{(2k-2s)!!} \frac{p_1!}{2^{l_1} l_1!} \frac{p_2!}{2^{l_2} l_2!} \bigg\{ f_{k,s,l_1,l_2}(p_1,p_2) + f_{k,s,l_1,l_2}(p_2,p_1) \bigg\}. 
\end{split}
\end{equation}
In particular,
\begin{align} \label{eq: moment eGinSE}
  M_{2p,0,N}^{\rm H, \mathbb{H} } = 
  & \frac{1}{2} M_{2p,0,2N}^{ H, \mathbb{C} }
  + \frac{1}{2} \sum_{r=1}^{p} \sum_{l=0}^p \tau^{p-r} \frac{(2N)!!}{(2N-2r)!!} \frac{(2p)!}{2^l l! (p-l+r)!} \binom{2N-2r}{p-l-r}.
\end{align}
    \end{itemize}
\end{cor}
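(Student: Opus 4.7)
The plan is to obtain Corollary~\ref{Cor_eGinibre moments} by direct substitution of the explicit Hermite-case formula \eqref{Ap Hermite} for $(A^p)^j_k$ into the general evaluations \eqref{eq: main1} and \eqref{eq: main2} provided by Theorem~\ref{thm: main}, exploiting the fact that the squared norms $h_k^{\rm H} = k!\sqrt{1-\tau^2}$ and all the skew-orthogonal data associated with the recurrence \eqref{hermite recurrence} admit closed-form expressions.

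For part (i) I would start from \eqref{eq: main1}, change the summation index from $j$ to $r = k - j$ so that $r$ ranges over $\mathcal{I}_{p_1 \wedge p_2}$, and note that $h_{k-r}^{\rm H}/h_k^{\rm H} = (k-r)!/k!$. Substituting \eqref{Ap Hermite} for both $(A^{p_1})^{k-r}_k$ and $(A^{p_2})^{k-r}_k$ then produces the stated quadruple sum, the two $\tau$-exponents $\tau^{(p_1+r)/2}$ and $\tau^{(p_2+r)/2}$ combining into $\tau^{(p_1+p_2)/2+r}$. Specialising to $p_2 = 0$ forces $r = 0$ and $l_2 = 0$, leaving only the sums over $k$ and $l_1$; after using the identity $\frac{(2p)!}{2^l\, l!(p-l)!} = (2p-1)!!\,2^{p-l}\binom{p}{l}$ and relabelling $l \mapsto p-l$, the stated closed form emerges, and its identification with $\tau^p M_{2p,N}^{\rm GUE}$ is immediate from Corollary~\ref{Cor_holo moments}.

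For part (ii) the first step is to evaluate the skew-orthogonal data $r_k$, $\lambda_k$, $\mu_{k,j}$ defined in \eqref{skew norm}--\eqref{mu in sop} using the Hermite recurrence \eqref{hermite recurrence}, in which $b_k = 0$ and $c_k = k\tau$. A short computation gives $\lambda_l = 2l+2$, $\mu_{k,j} = (2k)!!/(2j)!!$, and $r_k = 2(2k+1)!(1-\tau)\sqrt{1-\tau^2}$. Inserting these into \eqref{B in A} expresses the $(B^p)^\bullet_\bullet$ coefficients as short linear combinations of the $(A^p)^\bullet_\bullet$ coming from \eqref{Ap Hermite}; substitution into \eqref{def of mathfrak m jk} and \eqref{eq: main2}, followed by the change of variable $s = k - n$ and a regrouping of the four resulting pieces according to the symmetry under $(p_1,p_2) \leftrightarrow (p_2,p_1)$, will assemble them into the claimed combination $f_{k,s,l_1,l_2}(p_1,p_2) + f_{k,s,l_1,l_2}(p_2,p_1)$, with the two parity cases of $p_1,p_2$ accounting for the two displayed definitions of $f$. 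For the holomorphic specialisation \eqref{eq: moment eGinSE} I would proceed more directly by substituting $\mu_{N,j} = (2N)!!/(2j)!!$ together with the formula \eqref{Ap Hermite} for $(A^{2p})^{2N}_{2j}$ into \eqref{eq: holomorpic moments} and then setting $r = N - j$, so that the surviving range collapses to $r \in \{1,\ldots,p\}$.

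The main obstacle is entirely combinatorial: the four terms composing $\mathfrak{m}_{p_1,p_2,k}$ each unfold into a quadruple sum in $(r,s,l_1,l_2)$ whose ranges depend on the parities of $p_1$ and $p_2$, and keeping the index shifts consistent so that several near-cancellations collapse the answer into the compact prefactor $(2k)!!/(2k-2s)!!$ is delicate. The holomorphic derivation of \eqref{eq: moment eGinSE} via \eqref{eq: holomorpic moments} serves as a useful consistency check on the signs and ranges, and can be cross-verified against the $\tau=0$ formula in \eqref{eq: ginse moment} and against the Hermitian limit \eqref{eq: Gaussian moments} at $\tau \uparrow 1$.
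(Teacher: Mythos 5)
Your proposal is correct and is essentially the paper's own argument: Appendix~\ref{Appendix_eGinibre spectral moments} presents the corollary as ``a direct consequence of \eqref{Ap Hermite} and Theorem~\ref{thm: main}'', and your substitutions --- $h_k^{\rm H}=k!\sqrt{1-\tau^2}$, $r_k=2(2k+1)!(1-\tau)\sqrt{1-\tau^2}$, $\lambda_l=2l+2$, $\mu_{k,j}=(2k)!!/(2j)!!$, with the index changes $r=k-j$ and $s=k-n$ --- are exactly what that substitution requires. One remark: carrying out your derivation of \eqref{eq: moment eGinSE} from \eqref{eq: holomorpic moments} produces the correction term with a \emph{minus} sign, and the cross-checks you propose (against \eqref{eq: ginse moment} at $\tau=0$ and \eqref{eq: Gaussian moments} at $\tau\uparrow1$) confirm the minus sign is the correct one, so the plus sign printed in \eqref{eq: moment eGinSE} is a typo in the statement rather than a gap in your argument.
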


\section{Verification of some combinatorial identities} \label{Appendix_combinatorial identities}
Here, we provide the proof of some identities given in previous sections.

To show \eqref{L1 p1p2 tau1 Narayana}, notice that 
\begin{equation*}
\begin{split}
    L_1(p_1,p_2)  \Big|_{ \tau=1 } 
    & = \sum_{r=-p_1 \wedge p_2}^{p_1\wedge p_2} \sum_{l_1=0}^{p_1} \sum_{l_2=0}^{p_2} \frac{\alpha^{p-l_1-l_2}}{l_1+l_2+1} \binom{p_1}{l_1} \binom{p_1+l_1}{l_1+r} \binom{p_2}{l_2} \binom{p_2+l_2}{l_2-r}
    \\
    & = \sum_{l_1=0}^{p_1} \sum_{l_2=0}^{p_2} \frac{\alpha^{p-l_1-l_2}}{l_1+l_2+1} \binom{p_1}{l_1} \binom{p_2}{l_2} \binom{p+l_1+l_2}{l_1+l_2}
    \\
    & = \sum_{l=0}^{p} \frac{\alpha^{p-l}}{l+1} \binom{p}{l} \binom{p+l}{l}  = \sum_{l=0}^{p} \frac{\alpha^l}{p+1} \binom{p+1}{l} \binom{2p-l}{p}.
\end{split}
\end{equation*}
This is an equivalent expression for the Narayana polynomials $N_p(1+\alpha)$, see e.g. \cite[Eq. (2.5)]{MS08}.

In order to prove \eqref{C2 prime tau1}, notice that 
\begin{equation*}
    C'_2(p_1,p_2) \Big\vert_{\tau\to1}
    = -\frac{1}{2} \sum_{r\equiv p} \sum_{s=1}^{p} \bigg[ \frac{p_1}{p_1+p_2} \binom{p_1}{\frac{p_1+r}{2}-s} \binom{p_2}{\frac{p_2-r}{2}} + \frac{p_2}{p_1+p_2} \binom{p_1}{\frac{p_1-r}{2}} \binom{p_2}{\frac{p_2+r}{2}-s} \bigg].
\end{equation*}
Setting $p=(p_1+p_2)/2$ and $t=(p_2-r)/2$, we have
\begin{equation*}
    \sum_{r\equiv p} \binom{p_1}{\frac{p_1+r}{2}-s} \binom{p_2}{\frac{p_2-r}{2}} = \sum_{t=0}^{p_2} \binom{p_1}{p-s-t} \binom{p_2}{t} = \binom{2p}{p-s}.
\end{equation*}
Likewise, it holds that
\begin{equation*}
    \sum_{r\equiv p} \binom{p_1}{\frac{p_1-r}{2}} \binom{p_2}{\frac{p_2+r}{2}-s} = \binom{2p}{p-s}.
\end{equation*}
Thus we have
\begin{equation*}
    C'_2(p_1,p_2) \Big\vert_{\tau\to1} = -\frac{1}{2} \sum_{s=1}^{p} \binom{2p}{p-s} = -\frac{1}{2} \sum_{l=0}^{p-1} \binom{2p}{l}.
\end{equation*}

\medskip

\subsection*{Data availability statement} There is no data associated to this work.

\subsection*{Conflict of interest statement} The authors have no conflicts of interest to disclose.

%%%%%%%%%%bibliography%%%%%%%%%%%%%%%%%%%%%%%%%%%%%%%%%%%
\bibliographystyle{abbrv}

\end{document}